\newtheorem{lemma}[equation]{Lemma}
\newtheorem{proposition}[equation]{Proposition}
\newtheorem{theorem}[equation]{Theorem}
\newcommand{\D}{\ensuremath{\mathcal{D}}}
\newcommand{\Hy}{\ensuremath{\mathcal{H}}}
\newcommand{\N}{\ensuremath{\mathds N}}
\newcommand{\R}{\ensuremath{\mathds R}}
\newcommand{\HRule}{\rule{\linewidth}{0.5mm}}
\newcommand{\remove}[1]{}
\renewcommand{\epsilon}{\varepsilon}
\begin{document}
\begin{titlepage}

\begin{center}
\HRule \\[0.4cm] {\huge Halving Balls in Deterministic Linear Time}\\[0cm]
\HRule \\[1cm]
\end{center}

\begin{minipage}[c]{0.3\textwidth}%
  \begin{center}
  \textsc{Michael Hoffmann}\\
  {\small Institute of Theoretical Computer Science,}\\
  {\small ETH Z\"urich, Switzerland \\[0.4cm] }
  \par\end{center}%
\end{minipage}%
\begin{minipage}[c]{0.3\textwidth}%
  \begin{center}
  \textsc{Vincent Kusters}\\
  {\small Institute of Theoretical Computer Science,}\\
  {\small ETH Z\"urich, Switzerland \\[0.4cm] }
  \par\end{center}%
\end{minipage}
\begin{minipage}[c]{0.3\textwidth}%
  \begin{center}
  \textsc{Tillmann Miltzow}\\
  {\small Institute of Computer Science,}\\
  {\small Freie Universit\"at Berlin, Germany \\[0.4cm] }
  \par\end{center}%
\end{minipage}\rule{1\linewidth}{0mm} \\[0.6cm]

\begin{abstract}
  Let $\D$ be a set of $n$ pairwise disjoint unit balls in $\R^d$ and
  $P$ the set of their center points. A hyperplane $\Hy$ is an
  \emph{$m$-separator} for $\D$ if each closed halfspace bounded by
  $\Hy$ contains at least $m$ points from $P$. This generalizes the
  notion of halving hyperplanes, which correspond to
  $n/2$-separators. The analogous notion for point sets has been well
  studied. Separators have various applications, for instance, in
  divide-and-conquer schemes. In such a scheme any ball that is
  intersected by the separating hyperplane may still interact with
  both sides of the partition. Therefore it is desirable that the
  separating hyperplane intersects a small number of balls only.
  
  We present three deterministic algorithms to bisect or approximately
  bisect a given set of disjoint unit balls by a hyperplane: Firstly,
  we present a simple linear-time algorithm to construct an $\alpha
  n$-separator for balls in $\R^d$, for any $0<\alpha<1/2$, that
  intersects at most $cn^{(d-1)/d}$ balls, for some constant $c$ that
  depends on $d$ and $\alpha$. The number of intersected balls is best
  possible up to the constant $c$.  Secondly, we present a near-linear
  time algorithm to construct an $(n/2-o(n))$-separator in $\R^d$ that
  intersects $o(n)$ balls.  Finally, we give a linear-time algorithm
  to construct a halving line in $\R^2$ that intersects
  $O(n^{(5/6)+\epsilon})$ disks.

  Our results improve the runtime of a disk sliding algorithm by
  Bereg, Dumitrescu and Pach. In addition, our results improve and
  derandomize an algorithm to construct a space decomposition used by
  L{\"o}ffler and Mulzer to construct an onion (convex layer)
  decomposition for imprecise points (any point resides at an unknown
  location within a given disk).
\end{abstract}

\medskip

\begin{figure}[htbp]
  \centering\includegraphics[scale=0.81]{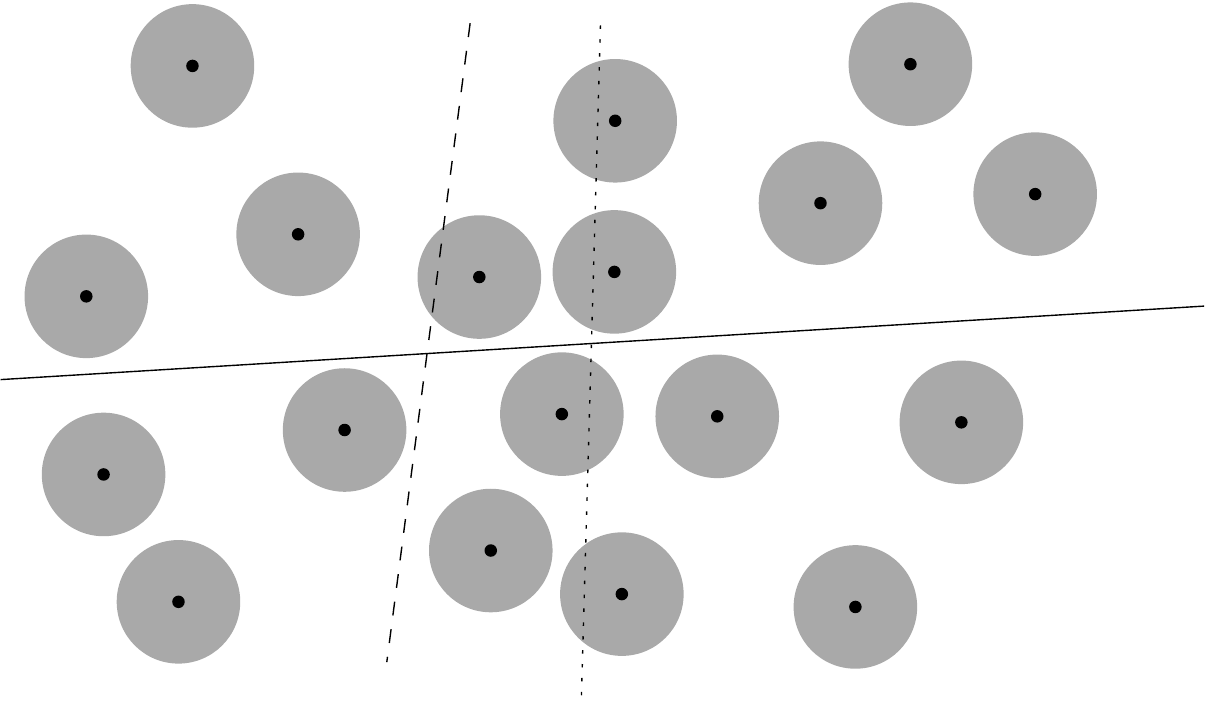}
  \caption{A set of $18$ disks in $\R^2$ and three separators. The
    dashed line forms a 6-separator. Both the solid line and the
    dotted line are halving lines. The solid line is preferable to the
    other two lines because it separates perfectly and intersects
    no disks.}
\end{figure}


\end{titlepage}
 
\section{Introduction}
\label{sec:introduction}
Let $\D$ be a set of $n$ pairwise disjoint unit balls in $\R^d$ and
$P$ the set of their center points. A hyperplane $\Hy$ is an
\emph{$m$-separator} for $\D$ if each closed halfspace bounded by
$\Hy$ contains at least $m$ points from $P$. This generalizes the
notion of halving hyperplanes, which correspond to
$n/2$-separators. The analogous notion of separating hyperplanes for
point sets has been well studied (see, e.g,~\cite{martini1998median}
for a survey).  Separators have various applications, for instance in
divide-and-conquer schemes (we discuss some explicit examples
below). In such a scheme any ball that is intersected by the
separating hyperplane may still interact with both sides of the
partition.  Therefore it is desirable that the separating hyperplane
intersects a small number of balls only.

Alon, Katchalski and Pulleyblank~\cite{DBLP:journals/dcg/AlonKP89}
prove that for any set $\D$ in $\R^2$, there exists a direction such
that every line with this direction intersects $O(\sqrt{n\log n})$
disks. In particular, this guarantees the existence of a halving line
that intersects at most $O(\sqrt{n\log n})$ disks. L{\"o}ffler and
Mulzer~\cite{DBLP:journals/corr/abs-1302-5328} observed that this
proof gives a randomized linear-time algorithm. In this paper, we
present the following three deterministic algorithms, each of which
computes an $m$-separator that intersects $O(n)$ balls for various
$m$.
\begin{theorem}\label{thm:kconstant}
  Given a set $\D$ of $n$ pairwise disjoint unit balls in $\R^d$ and
  $\alpha\in(0,1/2)$, one can construct in $O((1-2\alpha)n)$ time a
  hyperplane $\Hy$ that intersects $O((n/(1-2\alpha))^{(d-1)/d})$
  balls from $\D$ and such that each closed halfspace bounded by $\Hy$
  contains at least $\alpha n$ centers of balls from $\D$. The
  constants hidden by the asymptotic notation depend on $d$ only.
\end{theorem}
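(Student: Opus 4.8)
The plan is to exhibit the separator as one of a short, explicitly listed family of axis-parallel hyperplanes, using the disjointness of the balls to control how many balls any one of them can meet. For each coordinate $i\in\{1,\dots,d\}$ let $[\ell_i,r_i]$ be the range of the $i$-th coordinates of the centre set $P$, and consider the hyperplanes $\Hy_{i,j}\colon x_i=\ell_i+2j$ for $j=0,1,\dots,\lfloor(r_i-\ell_i)/2\rfloor$. Since a unit ball meets $\{x_i=t\}$ exactly when the $i$-th coordinate of its centre lies in $[t-1,t+1]$, and consecutive members of the $i$-th family are at $i$-distance $2$, each ball of $\D$ is met by at most one hyperplane of that family, hence by at most $d$ candidates overall, so the total number of (ball, candidate) incidences is at most $dn$. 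Moreover, because the $n$ balls are pairwise disjoint their union has volume $n\kappa_d$ ($\kappa_d$ the volume of the unit ball) and lies inside the box with $i$-th side $r_i-\ell_i+2$, whence $\prod_{i=1}^d(r_i-\ell_i+2)\ge n\kappa_d$ and some coordinate $i^\ast$ satisfies $r_{i^\ast}-\ell_{i^\ast}=\Omega(n^{1/d})$; its family therefore has $\Omega(n^{1/d})$ candidates (here, and below, constants depend only on $d$, and I set aside the range of small $n$, where the statement is checked directly).

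The heart of the argument is to show that one direction already supplies $\Omega\big((1-2\alpha)\,n^{1/d}\big)$ candidates that are $\alpha n$-separators. Writing $\ell_i^\alpha,r_i^\alpha$ for the $\lceil\alpha n\rceil$-th smallest and $\lceil\alpha n\rceil$-th largest $i$-th coordinate, a candidate $\Hy_{i,j}$ is an $\alpha n$-separator whenever $\ell_i+2j$ lies in the admissible interval $I_i=[\ell_i^\alpha,r_i^\alpha]$; and if $\ell_i+2j$ even lies in the slightly shrunken interval $[\ell_i^\alpha+1,r_i^\alpha-1]$, then $\Hy_{i,j}$ can meet only balls whose centre has $i$-th coordinate in $I_i$, of which there are at most $(1-2\alpha)n+O(1)$. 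As such candidates are $\ge 2$ apart, these particular candidates carry at most $(1-2\alpha)n+O(1)$ incidences in total. Hence, once we know $\max_i\big(r_i^\alpha-\ell_i^\alpha\big)\ge c_d(1-2\alpha)n^{1/d}$ for a constant $c_d>0$, that direction offers $\Omega\big((1-2\alpha)n^{1/d}\big)$ admissible candidates bearing $O\big((1-2\alpha)n\big)$ incidences together, so by averaging one of them meets only $O\big(n^{(d-1)/d}\big)$ balls --- which, since $1-2\alpha\le 1$, is within the claimed $O\big((n/(1-2\alpha))^{(d-1)/d}\big)$. (If $c_d(1-2\alpha)n^{1/d}$ drops below a fixed threshold, the target bound exceeds $n$ and any single admissible candidate, meeting at most $n$ balls, already works; for $d=1$ the claim is immediate since a point meets at most one unit interval.)

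The main obstacle is precisely the extent inequality $\max_i\big(r_i^\alpha-\ell_i^\alpha\big)\ge c_d(1-2\alpha)n^{1/d}$. Applying the volume argument above to the $\alpha$-trimmed box $\prod_i[\ell_i^\alpha,r_i^\alpha]$ establishes it whenever $\alpha$ is bounded away from $1/(2d)$, because only then does that box provably contain a constant fraction of $P$; but for $\alpha$ near or above $1/(2d)$ --- in particular for $\alpha$ close to $1/2$ --- the trimmed box may contain no centre, and a structural argument is needed. If the trimmed extents are small in every direction, then at least $(1-2\alpha)n$ centres lie in a thin slab orthogonal to the longest direction $i^\ast$, and I expect one can argue that either there is an empty gap next to the admissible positions inside that slab (giving a separator meeting $O(1)$ balls), or, after projecting out the thin direction --- where pairwise disjointness survives up to a bounded rescaling of the radii --- the problem drops to dimension $d-1$ and one finishes by induction. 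Turning this dichotomy into the stated inequality (or a weakening of it that still yields the stated bound) is the bulk of the work. The rest is bookkeeping: $i^\ast$, the order statistics $\ell_i^\alpha,r_i^\alpha$, and the ball count of each admissible candidate (made $O(1)$ apiece by bucketing the chosen coordinate into width-$2$ intervals) are all computed by an obvious $O(n)$-time procedure, and the sharper $O((1-2\alpha)n)$ bound would follow by replacing the exact selections with cheaper approximate ones, which I do not pursue here.
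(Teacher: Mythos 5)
There is a genuine gap, and it sits exactly where you flag it: the extent inequality $\max_i(r_i^\alpha-\ell_i^\alpha)\ge c_d(1-2\alpha)n^{1/d}$ is simply false if you only allow the $d$ coordinate directions, and your proposed rescue does not repair it. Concretely, take $d=2$ and $\alpha=0.3$, and let $\D$ be a ``cross'': $n/2$ disks with centers on the $y$-axis at spacing $3$ and $n/2$ disks with centers on the $x$-axis at spacing $3$. Sorted by $x$-coordinate, all ranks from $n/4$ to $3n/4$ have $x$-coordinate in $[-1,1]$, so every vertical $\alpha n$-separator has $x$-intercept in $[-1,1]$ and therefore intersects all $n/2$ disks of the vertical arm; by symmetry the same holds for horizontal lines. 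So no axis-parallel candidate achieves $o(n)$ intersections, the trimmed extent is $O(1)$ in both coordinate directions, and your dichotomy also fails: there is no empty gap in the thin slab, and projecting out the thin direction just reproduces the other (equally bad) coordinate direction. A diagonal line works beautifully here, which shows that for $\alpha$ above roughly $1/(2d)$ the choice of direction is the whole problem, not bookkeeping. (For $\alpha$ bounded away from $1/(2d)$ your volume argument on the trimmed box is fine and your averaging over $\Theta(n^{1/d})$ spaced candidates correctly yields $O(n^{(d-1)/d})$ intersections; that regime is essentially a special case of the paper's argument restricted to axis directions.)

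The paper closes this gap by testing not $d$ but $k=\Theta(d/(1-2\alpha))$ directions, chosen via an explicit Heilbronn-type construction on the sphere so that any $d$ of them have determinant $\Omega(k^{-(d-1)})$ (Lemma~\ref{lem:determinant}). The key counting step is then: if $S_i$ denotes the middle $b=(1-2\alpha)n$ centers in direction $i$, then $kb\le (d-1)n+\sum_{i_1<\dots<i_d}|S_{i_1}\cap\dots\cap S_{i_d}|$, and each $d$-wise intersection is bounded by a volume/packing argument (Lemma~\ref{lem:intersection}) in terms of the $d$ spreads and the determinant of the $d$ directions. Since $kb\ge dn$, some direction must have spread $t=\Omega(n^{1/d}(1-2\alpha)^{2-1/d})$, after which a pigeonhole selection along that direction finishes, exactly as in your averaging step. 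In your cross example this mechanism is visible: with many well-spread directions, the middle points cannot be squeezed into a thin slab for all of them simultaneously, because $d$ mutually transversal thin slabs can only contain $O(1)$ disjoint unit balls in common. If you want to salvage your write-up, you must import both ingredients --- a direction family of size growing like $1/(1-2\alpha)$ with a lower bound on all $d\times d$ determinants, and the $d$-wise intersection count --- or find a genuinely different argument; the axis-parallel family alone cannot be made to work.
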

\begin{theorem}\label{thm:kslow}
  Given a set $\D$ of $n$ pairwise disjoint unit balls in $\R^d$ and a
  function $f(n)\in \omega(1)\cap O(\log n)$, one can construct in
  $O(nf(n))$ time a hyperplane $\Hy$ such that each closed halfspace
  bounded by $\Hy$ contains at least
  $\frac{n}{2}(1-1/f(n))=\frac{n}{2}(1-o(1))$ balls from $\D$.
\end{theorem}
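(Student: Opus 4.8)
The plan is to obtain Theorem~\ref{thm:kslow} as a consequence of Theorem~\ref{thm:kconstant}: I would invoke the latter with a separation parameter $\alpha$ that depends on $n$ and is pushed towards $1/2$, and then upgrade its guarantee ``at least $\alpha n$ ball centers lie in each closed halfspace'' to the (formally stronger) guarantee ``at least $\tfrac n2(1-1/f(n))$ balls lie \emph{entirely} in each closed halfspace''. Concretely, set $\alpha=\alpha(n):=\tfrac12(1-\tfrac1{2f(n)})$, so that $1-2\alpha=\tfrac1{2f(n)}\in(0,\tfrac12)$ for all sufficiently large $n$; apply Theorem~\ref{thm:kconstant} with this $\alpha$, and return the hyperplane $\Hy$ it produces.

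The core of the argument is a short counting step: a separator that in addition crosses few balls must keep many whole balls on each side. Fix a closed halfspace $h$ bounded by $\Hy$. A ball of $\D$ lies entirely in $h$ exactly when its center lies in $h$ at distance at least $1$ from $\Hy$; note that a ball tangent to $\Hy$ from the $h$-side still lies in the closed halfspace $h$, whereas every ball whose center lies in $h$ at distance strictly less than $1$ from $\Hy$ --- in particular every ball whose center lies on $\Hy$ --- is crossed by $\Hy$. Consequently the number of balls lying entirely in $h$ is at least the number of ball centers in $h$ minus the number of balls crossed by $\Hy$. By Theorem~\ref{thm:kconstant} the former is at least $\alpha n$, and the number of crossed balls is $M:=O((n/(1-2\alpha))^{(d-1)/d})=O((nf(n))^{(d-1)/d})$ with the hidden constant depending on $d$ only. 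Hence each closed halfspace bounded by $\Hy$ contains at least $\alpha n-M=\tfrac n2-\tfrac n{4f(n)}-O((nf(n))^{(d-1)/d})$ whole balls.

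It remains to verify the parameters; as the statement is asymptotic we may take $n$ as large as needed. Since $f(n)=O(\log n)$ we have $(nf(n))^{(d-1)/d}=O(n^{1-1/d}(\log n)^{1-1/d})=o(n/(\log n)^2)=o(n/f(n))$, so $M\le\tfrac n{4f(n)}$ once $n$ exceeds a constant depending on $d$ and on the constant implicit in $f=O(\log n)$; for such $n$ the previous paragraph yields at least $\tfrac n2-\tfrac n{2f(n)}=\tfrac n2(1-\tfrac1{f(n)})$ whole balls in each closed halfspace, as required. The running time is that of reading the input plus one call to Theorem~\ref{thm:kconstant} with $1-2\alpha=\Theta(1/f(n))$, which is within $O(nf(n))$ (indeed $f(n)\ge1$ eventually, so even $O(n)\subseteq O(nf(n))$).

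The one genuine tension is that $\alpha$ should be as close to $1/2$ as possible, for the center split to be almost perfect, yet bounded enough away from $1/2$ for the crossing bound $M$ to remain $o(n/f(n))$; the choice $1-2\alpha=\Theta(1/f(n))$ balances these exactly in the regime $f=O(\log n)$ (if $f$ grew faster than any polylogarithm, $M$ could swallow the available slack $\tfrac n{2f(n)}$ and the reduction would break down), and I expect checking this balance, together with the tangency edge cases in the counting step, to be the only points needing care. Should one instead prefer a proof not routed through Theorem~\ref{thm:kconstant}, its heart would be the same geometric statement --- deterministically and in near-linear time, find a direction whose median hyperplane is crossed by only $o(n)$ balls, equivalently a direction along which no slab of width $2$ contains more than $o(n)$ centers --- and I would expect handling that directly, e.g.\ via a coarse net of $\Theta(f(n))$ candidate directions plus a packing/averaging bound over the balls, to be where the real work lies.
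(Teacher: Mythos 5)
Your counting step (whole balls per side $\ge$ centers per side minus crossed balls, with the tangency cases handled) is fine, but the reduction it rests on does not go through: Theorem~\ref{thm:kconstant} cannot be invoked with $\alpha=\alpha(n)\to 1/2$. That theorem is the \emph{constant-$k$} instantiation of Theorem~\ref{thm-approxPlane}, obtained with $b=\lfloor(1-2\alpha)n\rfloor$ and $k=\lceil(1-2\alpha)d\rceil$; once $1-2\alpha=\frac{1}{2f(n)}=o(1)$ this gives $k=1$ for all large $n$, and Condition~(\ref{cond:t1}), $dn\le kb$, fails badly (it would require $dn\le n/(2f(n))$). This is not a formality about quantifier order: the entire argument that one of the tested directions has large spread is the pigeonhole bound $kb=\sum_i|S_i|\le(d-1)n+\ldots$, which only produces a contradiction when $kb\ge dn$, i.e.\ $k\ge dn/b=2df(n)$ for your choice of $b$. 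With $O(1)$ candidate directions there is simply no guarantee that any of them leaves only $o(n)$ centers in the middle slab of width $t$. A symptom of the same problem is your runtime accounting: taken literally, your reduction would yield an $O((1-2\alpha)n)=O(n/f(n))$-time, i.e.\ sublinear, algorithm for a near-perfect bisection; the true cost is $O(kn)=O(nf(n))$, which is exactly why Theorem~\ref{thm:kslow} carries that runtime.

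The fix is the route you relegate to your final sentence, and it is the paper's actual proof: apply Theorem~\ref{thm-approxPlane} directly with $b=n/f(n)$ and $k=df(n)$. Then $kb=dn$ satisfies Condition~(\ref{cond:t1}); since $f(n)=O(\log n)$ we get $t=\Theta\bigl(n^{1/d}/f(n)^{2-1/d}\bigr)\to\infty$, so Condition~(\ref{cond:t2}) holds for large $n$; the guarantee of $(n-b)/2=\frac n2(1-1/f(n))$ centers on each side comes with the corresponding balls lying \emph{entirely} on that side, because Lemma~\ref{lem:Selectpoint} places $\Hy$ at distance greater than one from every ball whose center is outside the middle slab --- so no subtraction of crossed balls is needed at all; and the runtime is $O(kn)=O(nf(n))$.
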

\begin{theorem}\label{thm:2dlinear}
  For any set $\D$ of $n$ pairwise disjoint unit disks in $\R^2$ and
  any $\varepsilon>0$ one can construct in $O(n)$ time a line $\ell$
  that intersects $O(n^{(5/6)+\varepsilon})$ disks from $\D$ and such
  that each closed halfplane bounded by $\ell$ contains at least $n/2$
  centers of disks from $\D$.
\end{theorem}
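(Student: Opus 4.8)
The plan is to reduce Theorem~\ref{thm:2dlinear} to the deterministic choice of a single good \emph{direction}, and to make that choice via a derandomised sampling argument. Fix a direction $\theta$ and let $u_\theta$ be a unit vector orthogonal to it. A line $\ell$ of direction $\theta$ intersects exactly the disks of $\D$ whose centres lie in the width-$2$ slab bounded by the two lines parallel to $\ell$ at distance $1$; write $g_P(\theta)$ for the maximum number of points of $P$ contained in a width-$2$ slab with normal $u_\theta$, equivalently the maximum number of disks met by a single line of direction $\theta$. The theorem of Alon, Katchalski and Pulleyblank~\cite{DBLP:journals/dcg/AlonKP89} quoted above states precisely that $\min_\theta g_P(\theta) = O(\sqrt{n\log n})$. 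Hence, for \emph{every} direction $\theta$, the median line of $P$ with direction $\theta$ --- the line placed so that each closed halfplane it bounds contains at least $n/2$ centres, computable from $P$ by one linear-time median selection along $u_\theta$ --- is a halving line that meets at most $g_P(\theta)$ disks. So it suffices to find, deterministically in $O(n)$ time, a direction $\theta^\ast$ with $g_P(\theta^\ast) = O(n^{5/6+\varepsilon})$.

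First I would compute a subset $Q\subseteq P$ of size $q$ that is a $\delta$-approximation for the range space on $P$ generated by width-$2$ slabs; this range space has constant VC dimension (each width-$2$ slab is an intersection of two parallel halfplanes), so a $\delta$-approximation of size $q = O(\delta^{-2}\log(1/\delta))$ exists. Next, working with $Q$ alone, I would enumerate the $O(q^2)$ ``breakpoint'' directions at which the combinatorial type of ``which subsets of $Q$ fit into a common width-$2$ slab'' can change --- the directions $\theta$ with $\langle p-p',u_\theta\rangle = \pm 2$ for some $p,p'\in Q$ --- together with one interior direction of each of the $O(q^2)$ arcs between consecutive breakpoints. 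For each candidate direction $\theta$ I would compute $g_Q(\theta)$, the largest number of points of $Q$ in a width-$2$ slab with normal $u_\theta$, by projecting $Q$ onto $u_\theta$, sorting, and sweeping a window of width $2$, in $O(q\log q)$ time. Since $g_Q(\cdot)$ is constant on each arc, evaluating at the breakpoints and at one interior direction per arc finds a $\theta^\ast$ realising $\min_\theta g_Q(\theta)$ exactly; the algorithm outputs the median line $\ell_P(\theta^\ast)$ of $P$ with direction $\theta^\ast$.

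The intersection bound comes from a sandwich between $g_P$ and $g_Q$. Applying the $\delta$-approximation property to each width-$2$ slab and maximising over slabs with a fixed normal gives $g_P(\theta)\le (n/q)\,g_Q(\theta)+\delta n$ and $g_Q(\theta)\le (q/n)\,g_P(\theta)+\delta q$ for every $\theta$. Using the first with $\theta=\theta^\ast$, then $g_Q(\theta^\ast)=\min_\theta g_Q(\theta)\le g_Q(\theta_{\mathrm{opt}})$ (where $\theta_{\mathrm{opt}}$ attains $\min_\theta g_P$), and then the second with $\theta=\theta_{\mathrm{opt}}$, yields $g_P(\theta^\ast)\le g_P(\theta_{\mathrm{opt}})+2\delta n = O(\sqrt{n\log n})+O(\delta n)$. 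Taking $\delta = n^{-1/6+\varepsilon}$ makes this $O(n^{5/6+\varepsilon})$; then $q = O(n^{1/3-2\varepsilon}\log n)$, the enumeration costs $O(q^3\log q)=O(n^{1-6\varepsilon}\log^4 n)=O(n)$ for $n$ large in terms of $\varepsilon$, the final median selection is $O(n)$, and the total running time is $O(n)$ --- \emph{provided $Q$ itself can be produced in $O(n)$ time}.

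That proviso is the main obstacle. The textbook deterministic constructions of a $\delta$-approximation of size $\widetilde{O}(\delta^{-2})$ run in time $O(n\cdot\mathrm{poly}(1/\delta))$, with the exponent of the polynomial tied to the VC dimension, which for $\delta=n^{-\Theta(1)}$ is superlinear. I would therefore build $Q$ more carefully: either by a merge-and-reduce scheme specialised to the constant-VC-dimension slab range space in which every reduction is applied only to blocks of sublinear size (so the per-block $\mathrm{poly}(1/\delta)$ overhead is charged against an $n$-term only once and stays within $O(n)$), or by first shrinking the instance with a coarse grid or a coarse separator in the spirit of Theorem~\ref{thm:kconstant} before approximating. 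It is the three-way balance between the cost of this construction, the $O(q^3)$ enumeration, and the $\delta n$ error term that pins the exponent at $5/6$ and leaves the $n^{\varepsilon}$ slack (which also absorbs the logarithmic factors above). Finally I would check the routine points: that the slab range space has bounded VC dimension, that $g_Q$ has only $O(q^2)$ breakpoints over $\theta\in[0,\pi)$, and that the parity and tie-breaking in placing the median line (through a centre, between two centres, or through several projection-collinear centres) are harmless, since they change the count of met disks by at most $O(\sqrt n)$, which is in any case bounded by $g_P(\theta^\ast)$.
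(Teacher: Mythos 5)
Your approach is genuinely different from the paper's: you stay in the primal, reduce the problem to deterministically selecting one good slab direction, and try to derandomise the direction choice via a $\delta$-approximation for the range space of width-$2$ slabs. The paper instead works entirely in the dual arrangement and runs a prune-and-search in the style of the Lo--Matou\v{s}ek--Steiger ham-sandwich algorithm: it repeatedly subdivides the current slab of slopes into $m\le 64$ subslabs, encloses the median level in a trapezoid, and uses an averaging bound (Lemma~\ref{lem:integral}, a quantitative version of Alon--Katchalski--Pulleyblank restricted to a subinterval of slopes) to find a subslab in whose core at most $(\frac12+\varepsilon)n$ lines meet the trapezoid's $1$-tube; the geometric decrease of the number of lines gives the $O(n)$ total time and the exponent $\frac56$ arises as $1+\log(\frac12+\varepsilon)/\log\frac{m}{1-2\gamma}$. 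The parts of your argument that are worked out are essentially correct: the sandwich $g_P(\theta^\ast)\le g_P(\theta_{\mathrm{opt}})+2\delta n$ is a valid use of the approximation property, the budget $\delta=n^{-1/6+\varepsilon}$, $q=\widetilde O(\delta^{-2})$, $O(q^3\log q)=O(n)$ enumeration is consistent, and the breakpoint argument is fine up to a minor omission (you also need the $O(q^2)$ directions where two projections coincide, i.e.\ $\langle p-p',u_\theta\rangle=0$, since the window count anchored at a point can change there; this does not affect the asymptotics).

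The genuine gap is exactly the step you flag yourself and then leave unresolved: the deterministic construction of the $\delta$-approximation $Q$ in $O(n)$ time for $\delta=n^{-1/6+\varepsilon}$. This is not a routine detail --- it is the entire derandomisation difficulty of the theorem relocated into one black box. The known deterministic constructions (Chazelle--Matou\v{s}ek style) run in time $O\bigl(n\cdot(1/\delta)^{c}\bigr)$ with $c$ a constant tied to the shatter function of the range space, which for $\delta^{-1}=n^{\Theta(1)}$ is polynomially superlinear; the two fixes you sketch do not close this. A merge-and-reduce scheme that only ever reduces sublinear blocks still has to drive the \emph{final} error down to $\delta$, which forces either many rounds of reduction whose errors accumulate beyond $\delta$ or a last reduction applied to a set whose size times $\mathrm{poly}(1/\delta)$ is again superlinear; you would need to exhibit a concrete parameter schedule and verify both the error telescoping and the time bound, and it is not clear one exists with final error $n^{-1/6+\varepsilon}$ in truly linear time. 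The alternative of ``shrinking the instance with a coarse grid or a separator in the spirit of Theorem~\ref{thm:kconstant}'' is not an approximation construction at all as stated: a coarse partition of the plane does not by itself yield a subset of $P$ that is $\delta$-representative for \emph{all} width-$2$ slabs in \emph{all} directions. Until $Q$ is actually constructible in $O(n)$ deterministic time, the proof does not go through; this is precisely the obstacle the paper's dual prune-and-search is designed to avoid, by never needing a global small-error sample and instead paying only a constant-factor loss per iteration against a geometrically shrinking line set.
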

\noindent We develop a generic algorithm in $\R^d$ that can be
instantiated with different parameters to obtain
Theorem~\ref{thm:kconstant} and Theorem~\ref{thm:kslow}. Note that
Theorem~\ref{thm:kslow} improves the separation of the center points
(compared to Theorem~\ref{thm:kconstant}) at the cost of increasing
the running time slightly. Theorem~\ref{thm:2dlinear} computes a true
halving line in the plane.

\paragraph{Related work.} Bereg, Dumitrescu and
Pach~\cite{bereg2008sliding} (see also~\cite[Lemma
9.3.2]{pach2009combinatorial}) strengthen the initial result of Alon,
Katchalski and Pulleyblank slightly by proving that there exists a
direction such that any line with this direction has at most
$O(\sqrt{n \log n})$ disks \emph{within constant distance}. They use
this lemma to prove that one can always move a set of $n$ unit disks
from a start to a target configuration in $3n/2+O(\sqrt{n\log n})$
moves. Their algorithm runs in $O(n^{3/2}{(\log n)}^{-1/2})$ time,
which Theorem~\ref{thm:2dlinear} improves to $O(n\log
n)$.

Held and Mitchell~\cite{held2008triangulating} introduced a paradigm
for modeling data imprecision where the location of a point in the
plane is not known exactly. For each point, however, we are given a
unit disk that is guaranteed to contain the point. The authors show
that after preprocessing the disks in $O(n\log n)$ time, they can
construct a triangulation of the actual point set in linear time.
L{\"o}ffler and Mulzer~\cite{DBLP:journals/corr/abs-1302-5328} follow
the same model to construct the onion layer of an imprecise point set.
They observed that the proof by Alon~et~al.\ immediately gives a
randomized expected linear-time algorithm in the following fashion.
Pick an angle $\beta\in[0,\pi]$ uniformly at random and compute a
halving line for the disks with slope $\beta$. This halving line
intersects at most $O(\sqrt{n\log n})$ unit disks with probability at
least $1/2$. L{\"o}ffler and Mulzer use this algorithm to compute a
\emph{$(\alpha,\beta)$-space decomposition tree}: a data structure
similar to a binary space partition in which every line is an $\alpha
k$-separator that intersects at most $k^\beta$ disks. They show that
such a $(1/2 + \epsilon, 1/2 + \epsilon)$ space decomposition tree can
be computed in $O(n\log n)$ expected time, for every $\epsilon > 0$.
Theorem~\ref{thm:kconstant} can be used to improve this to $O(n\log
n)$ deterministic time. They also present a simple deterministic
linear-time algorithm that guarantees that at least $n/10$ of the
disks are completely on each side of some axis-parallel line. Next,
they describe a more sophisticated, deterministic $O(n\log n)$
algorithm to compute a line $\ell$ such that there are at least
$n/2-cn^{5/6}$ disks completely to each side of $\ell$. The algorithm
uses an $r$-partition of the plane~\cite{matouvsek1992efficient} to
find good candidate lines. Theorem~\ref{thm:2dlinear} can be used to
improve running time of this algorithm to $O(n)$.

Tverberg~\cite{tverberg1979seperation} studies a related question. He
proves that for every natural number $k$ there is a number $K(k)$,
such that given convex pairwise disjoint sets $C_1,\dots,C_{K(k)}$,
there always exists a line with some set completely on one side and
$k$ sets completely on the other side. Finally, the question has a
continuous counterpart that has been solved
recently~\cite{esposito2012longest}.


\paragraph{Organization.} We develop a generic algorithm to compute a
separator in $\R^d$ (where the trade-off between the number of
intersected disks and the number of disk centers on each side is
determined by a parameter) and prove Theorem~\ref{thm:kconstant} and
Theorem~\ref{thm:kslow} in
Section~\ref{sec:computing_a_separator_with_few_intersections}. We
prove Theorem~\ref{thm:2dlinear} in
Section~\ref{sec:computing_a_halving_line_with_few_intersections}. Our
algorithm follows the approach used in the linear-time ham-sandwich
cut algorithm~\cite{DBLP:journals/dcg/LoMS94}. It divides the line
arrangement dual to the set of disk center points by vertical lines
such that each slab (the region bounded by two consecutive vertical
lines) contains at most a constant fraction of the vertices of the
arrangement. In each iteration, the algorithm chooses a slab and
discards the rest of the arrangement.

\section{Separating balls in $\R^d$}
\label{sec:computing_a_separator_with_few_intersections}

In this section, we develop a generic algorithm to compute a separator
for a given set of pairwise disjoint unit balls in $\R^d$. Using this
generic algorithm, we will give two algorithms to compute an
approximately halving hyperplane that intersects a sublinear number of
balls.

Besides the set $\D$ of $n$ balls in $\R^d$, the generic algorithm has
two more parameters. First, a number $b\in\{1,\ldots,n\}$ that
quantifies the quality of the approximation: we will show that the
hyperplane constructed by the algorithm forms an $(n-b)/2$-separator
for $\D$. The main step of the algorithm consists in finding a
direction $d$ such that we are guaranteed to find a desired separator
that is orthogonal to $d$. A second parameter $k\in\N$ of the
algorithm specifies the number of different directions to generate and
test during this step. As a rule of thumb, generating more directions
results in a better solution, but the runtime of the algorithm
increases proportionally. The algorithm works for certain combinations
of these parameters only, as detailed in the following theorem.
\begin{restatable}{theorem}{thmApprox}\label{thm-approxPlane}
  Given a set $\D$ of $n$ pairwise disjoint unit balls in $\R^d$ and
  parameters $b\in\{1,\ldots,n\}$ and $k\in\N$ that satisfy the
  conditions
  \begin{eqnarray}
    d n & \leq & kb \quad\mathrm{and} \label{cond:t1}\\ 
    t :=
    \left({\frac{V_d}{2d^{(d-2)/2}}}\right)^{1/d}\frac{n^{1/d}}{k^{2-1/d}}
    & > & 2, \label{cond:t2}
  \end{eqnarray}
  (where $V_d$ is the volume of the $d$-dimensional unit ball), one
  can construct in $O(kn)$ time a hyperplane $\Hy$ that intersects at
  most $2b/(t-2)$ balls from $\D$ and such that each closed halfspace
  bounded by $\Hy$ contains at least $(n-b)/2$ centers of balls from
  $\D$.
\end{restatable}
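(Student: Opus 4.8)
The plan is to reduce the statement to a single geometric lemma — the existence of a direction whose ``median slab'' is wide — and to invoke disjointness of the balls only in the proof of that lemma.

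First, the part that uses no disjointness. Put $m:=\lceil(n-b)/2\rceil$, so $m\ge(n-b)/2$. For a unit vector $u$, project $P$ onto $u$ and let $\ell(u)$ and $r(u)$ be the $m$-th smallest and $m$-th largest of the values $\langle p,u\rangle$, $p\in P$, and set $w(u):=r(u)-\ell(u)$. Any hyperplane orthogonal to $u$ with offset $s\in(\ell(u),r(u))$ is an $m$-separator, hence an $(n-b)/2$-separator, since the closed lower halfspace already contains the $\ge m$ centers with $\langle p,u\rangle\le\ell(u)$, and symmetrically on the other side. Moreover, at most $n-2m\le b$ centers have projection strictly inside $(\ell(u),r(u))$, and for $s\in(\ell(u)+1,r(u)-1)$ only those centers can have a unit ball meeting the hyperplane. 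Integrating the number of intersected balls over $s\in(\ell(u)+1,r(u)-1)$ gives a total of at most $2b$ over an interval of length $w(u)-2$, so some admissible offset yields a hyperplane intersecting at most $2b/(w(u)-2)$ balls. Since $t>2$ by \eqref{cond:t2}, the theorem follows once we produce a direction with $w(u)\ge t$; over $k$ candidate directions this is tested in $O(kn)$ time, computing each $\ell(u_j),r(u_j)$ by linear-time selection and the best offset in the winning slab by a linear sweep.

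So the core task is: deterministically generate $k$ well-distributed directions $u_1,\dots,u_k$ on $S^{d-1}$ and show $\max_j w(u_j)\ge t$. I would argue by contradiction. Suppose $w(u_j)<t$ for all $j$. For each $j$ the set $W_j:=\{p\in P:\langle p,u_j\rangle\in[\ell(u_j),r(u_j)]\}$ has at least $b+1$ elements and lies in a slab $S_j$ of width $<t$ with unit normal $u_j$. Counting incidences, $\sum_j|W_j|\ge k(b+1)>dn$ by \eqref{cond:t1}, so by averaging over the $n$ centers some center $q$ lies in at least $d$ of the sets $W_j$; assuming for the moment (the real version is in the next paragraph) that $d$ of these can be chosen with $|\det(u_{j_1},\dots,u_{j_d})|\ge\delta_d$ for a positive constant $\delta_d=\delta_d(d)$, $q$ lies in the parallelepiped $\Omega:=S_{j_1}\cap\dots\cap S_{j_d}$, of volume $\prod_i w(u_{j_i})/|\det(u_{j_1},\dots,u_{j_d})|<t^d/\delta_d$. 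The disjoint unit balls whose centers lie in $\Omega$ fit in the $1$-neighbourhood of $\Omega$, of volume $<(t+2)^d/\delta_d<2^dt^d/\delta_d$, so fewer than $2^dt^d/(\delta_d V_d)$ of them. Taking $\delta_d$ of the order of $d^{-d/2}$ — the typical magnitude of a $d\times d$ determinant of well-spread unit vectors — this bound is of order $t^d d^{d/2}/V_d$, which by \eqref{cond:t2} is of order $n/k^{2d-1}$ (times a factor depending only on $d$); the contradiction comes from showing that more centers than this are actually trapped in such a bounded region, and it is precisely \eqref{cond:t1} and \eqref{cond:t2} that make the two estimates collide.

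The step I expect to be the real obstacle is exactly that last count: a single deep center $q$ contributes only one ball to $\Omega$, so one must exhibit a genuinely large subset of $P$ inside a bounded region. I would obtain this by summing the volume bound $|W_{j_1}\cap\dots\cap W_{j_d}|\le\mathrm{vol}\big((S_{j_1}\cap\dots\cap S_{j_d})^{+1}\big)/V_d$ over all non-degenerate $d$-tuples of directions and comparing with the convexity lower bound $\sum_{d\text{-tuples}}|W_{j_1}\cap\dots\cap W_{j_d}|=\sum_p\binom{\mu(p)}{d}\ge n\binom{\bar\mu}{d}$, where $\mu(p)$ counts the windows containing $p$ and $\bar\mu=\tfrac1n\sum_j|W_j|\ge k(b+1)/n$. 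For this to close one must also verify, using the well-distributedness of the $u_j$, that a constant fraction of all $d$-tuples — and enough of the incidence-carrying ones — are non-degenerate, so the degenerate tuples can be discarded; this is the only place the geometry of the direction set enters. Matching the resulting inequality against the exact constants $V_d$, $d^{(d-2)/2}$ and $k^{2d-1}$ in \eqref{cond:t2} is then bookkeeping: tracking $\delta_d$, the slack in passing to the $1$-neighbourhood, and the Jensen step.
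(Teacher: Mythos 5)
Your overall strategy is the paper's: try $k$ deterministically generated directions, look at the middle-$b$ ``window'' of projected centers for each, assume all windows are narrower than $t$, and derive a contradiction by double-counting $d$-wise intersections of windows---a packing/volume upper bound per $d$-tuple (using disjointness and a determinant lower bound) against an incidence lower bound coming from condition~(\ref{cond:t1})---then pick an offset in the wide window by averaging. Your Jensen step $\sum\binom{\mu(p)}{d}\ge n\binom{\bar\mu}{d}$ and your integral over offsets are cosmetic variants of the paper's counting inequality $\sum_i|S_i|\le(d-1)n+\sum_{i_1<\dots<i_d}|S_{i_1}\cap\dots\cap S_{i_d}|$ and of Lemma~\ref{lem:Selectpoint}.

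The genuine gap is the one ingredient you defer: the direction set. You posit $|\det(u_{j_1},\dots,u_{j_d})|\ge\delta_d$ for a constant $\delta_d=\delta_d(d)$ ``of order $d^{-d/2}$'', independent of $k$. That is impossible: among any $k$ unit vectors some two are at angular distance $O(k^{-1/(d-1)})$, so some $d$-tuple has determinant tending to $0$ as $k$ grows. The best achievable uniform bound is $\Theta_d(k^{-(d-1)})$; this is a Heilbronn-type problem, and the paper uses the explicit Erd\H{o}s/Barequet construction $p_i=\frac1k(i,\,i^2\bmod k,\dots)$ lifted to $S^{d-1}$ to guarantee $|\det|\ge 2^{d-1}/((d-1)!\,d^{d/2}k^{d-1})$ for \emph{every} $d$-tuple (Lemmas~\ref{lem:EasyHeilbronn} and~\ref{lem:determinant}). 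That extra factor $k^{d-1}$ per tuple, times $\binom{k}{d}$ tuples, is precisely why $t$ is defined with $k^{2-1/d}$ (i.e.\ $t^d k^{2d-1}\sim n$) in condition~(\ref{cond:t2}); your arithmetic with a constant $\delta_d$ would prove a stronger statement that is false. Your fallback---``discard the degenerate tuples''---does not mesh with your lower bound either: $\sum_p\binom{\mu(p)}{d}$ counts all $d$-tuples, and nothing prevents a point's $\mu(p)$ covering windows from having all their $d$-subsets degenerate, so you really need every $d$-tuple to be quantitatively non-degenerate, which is exactly what the explicit construction delivers. A small further point: finding the \emph{best} offset in the winning window by a ``linear sweep'' requires sorting the $O(b)$ interval endpoints; the paper instead finds a good-enough offset in $O(b)$ time by a median-of-intervals prune-and-search, which matters for the claimed $O(kn)$ bound when $k$ is constant.
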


Perhaps more interesting than Theorem~\ref{thm-approxPlane} in its
full generality are the special cases stated as
Theorem~\ref{thm:kconstant} and Theorem~\ref{thm:kslow} above.
Theorem~\ref{thm:kconstant} describes the case that $k$ is constant.
It can be obtained by choosing $b=\lfloor (1-2\alpha) n \rfloor$ and
$k=\lceil (1-2\alpha)d\rceil$ for $\alpha\in(0,1/2)$.
Theorem~\ref{thm:kslow} describes the case that $k$ is a very slowly
growing function $f(n)$. It can be obtained by choosing $b= n/f(n)$
and $k=df(n)$.
    
\paragraph{Overview of the algorithm.}
Our algorithm consists of two steps. In the first step, we find a
direction $d$ in which the balls from $\D$ are ``spread out
nicely''. More precisely, for an arbitrary (oriented) line $\ell$
consider the set $P$ of points that results from orthogonally
projecting all centers of balls from $\D$ onto $\ell$. Denote by
$p_1,\ldots,p_n$ the order of points from $P$ sorted along $\ell$. We
want to find an $(n-b)/2$-separator orthogonal to $\ell$. This means
that the separating hyperplane $\Hy$ must intersect $\ell$ somewhere
in between $p_{(n-b)/2}$ and $p_{(n+b)/2}$.

However, we also need to guarantee that not too many points from $P$
are within distance one of $\Hy$, which may or may not be possible
depending on the choice of $\ell$. Therefore we try several possible
directions/lines and select the first one among them that works. In
order to evaluate the quality of a line, we use as a simple criterion
the \emph{spread}, defined to be the distance between $p_{(n-b)/2}$
and $p_{(n+b)/2}$. Given a line $\ell$ with sufficient spread, we can
find a suitable $(n-b)/2$-separator orthogonal to $\ell$ in the second
step of our algorithm, as the following lemma demonstrates. Note the
safety cushion of width one to the remaining disks of $\D$.
\begin{lemma}\label{lem:Selectpoint}
  Given a set $P$ of $b$ (one-dimensional) points in an interval
  $[\ell,r]$ of length $w=r-\ell>2$, we can find in $O(b)$ time a
  point $p\in(\ell+1,r-1)$ such that at most $2b/(w-2)$ points from
  $P$ are within distance one of $p$.
\end{lemma}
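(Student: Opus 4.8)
Here the plan is a short pigeonhole (averaging) argument. Observe that a point $p$ has a point of $P$ within distance one exactly when that point lies in the closed window $[p-1,p+1]$ of length $2$. So it suffices to pack $(\ell+1,r-1)$ with many pairwise disjoint length‑$2$ windows whose centers lie in the \emph{open} interval $(\ell+1,r-1)$, and then output the center of a window that contains the fewest points of $P$.

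Concretely, I would set $k:=\lceil w/2\rceil-1$; since $w>2$ this gives $k\ge 1$, and also $k\ge(w-2)/2$. Because $2(k-1)<w-2$ there is a positive slack $\gamma:=w-2-2(k-1)$, so one can place centers $c_1<\dots<c_k$ strictly inside $(\ell+1,r-1)$ with consecutive centers more than $2$ apart, e.g. $c_j=\ell+1+\tfrac{\gamma}{k+1}j+2(j-1)$; then the windows $W_j:=[c_j-1,c_j+1]$ are genuinely pairwise disjoint. Hence $\sum_{j=1}^{k}|P\cap W_j|\le|P|=b$, so some $W_{j^{*}}$ satisfies $|P\cap W_{j^{*}}|\le b/k\le 2b/(w-2)$, and returning $p=c_{j^{*}}$ yields a point in $(\ell+1,r-1)$ with at most $2b/(w-2)$ points of $P$ within distance one. (One can equivalently phrase the existence via the integral of $p\mapsto|P\cap[p-1,p+1]|$ over $(\ell+1,r-1)$, which is at most $2b$; the discrete packing is what makes the algorithm convenient.)

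For the $O(b)$ running time the point is that when $k\gg b$ we must not touch all $\Theta(w)$ windows, nor sort $P$. Since $c_j$ is an explicit affine function of $j$ with slope in $(2,3]$, for any given $x\in P$ the set of indices $j$ with $x\in W_j$ is an interval of length less than one, so the unique such $j$ (if it exists) is computable in $O(1)$. It therefore suffices to keep a counter array only for the first $m:=\min(k,b+1)$ windows, scan $P$ once incrementing the counter of the window each point falls into, and return the center of the least‑loaded window: if $k\le b$ the bound $b/k\le 2b/(w-2)$ still applies to these $m=k$ windows, and if $k>b$ then among $m=b+1$ disjoint windows at least one is empty, which is trivially at most $2b/(w-2)$. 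The total work is $O(m+b)=O(b)$.

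The mathematical content is routine; the only care needed is (i) nudging the window centers strictly inside the open interval while keeping the windows truly disjoint (so that the per‑window counts sum to at most $b$), and (ii) getting $O(b)$ rather than $O(b\log b)$, which is exactly what the $O(1)$ window lookup and the truncation to $\min(k,b+1)$ windows buy us — this is the step I would expect to need the most attention in a careful write‑up.
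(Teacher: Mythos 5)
Your proof is correct, and the existential half (pigeonhole over $\Theta(w)$ pairwise disjoint closed length-$2$ windows packed into the interval, with the careful check that $k\ge (w-2)/2$ and that the centers land strictly inside $(\ell+1,r-1)$) is exactly the paper's argument. Where you genuinely diverge is the algorithmic half. The paper realizes the $O(b)$ bound by a prune-and-search over the intervals: repeatedly take the median interval, classify each surviving point as left/inside/right, and recurse on the side with the better point-to-interval ratio, so the work forms a geometric series. You instead bucket directly: since the window centers are an explicit affine function of the index with slope greater than $2$, each point determines its (unique, if any) window in $O(1)$ arithmetic, and truncating to $\min(k,b+1)$ counters handles the case $k\gg b$ by guaranteeing an empty window. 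Your route is simpler and avoids the slightly delicate invariant the paper must maintain (that the point-to-interval ratio never degrades during the recursion); its only cost is that it leans on constant-time floor/indexing arithmetic on real coordinates, whereas the paper's search uses comparisons only --- a distinction that matters in the strict real-RAM model but is harmless here, since the paper itself already takes $\lceil (w-2)/2\rceil$ of a real quantity. Both yield the stated $O(b)$ bound.
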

\begin{proof}
  We select $\lceil(w-2)/2\rceil$ pairwise disjoint closed
  sub-intervals of length two in $(\ell,r)$. By the pigeonhole
  principle at least one these intervals contains at most
  $b/\lceil(w-2)/2\rceil\le 2b/(w-2)$ points from $P$. Select $p$ to
  be the midpoint of such an interval.

  Algorithmically, we can find such an interval using a kind of binary
  search on the intervals: We maintain a set of points and a range of
  intervals. At each step consider the median interval $I$ and test
  for every point whether it lies in $I$, to the left of $I$, or to
  the right of $I$. Then either $I$ contains at most $2b/(w-2)$ points
  from $P$ and we are done, or we recurse on the side that contains
  fewer points, after discarding all points and intervals on the other
  side. The process stops as soon as the current range of intervals
  contains at most $2b/(w-2)$ points from $P$, at which point any of
  the remaining intervals can be chosen. Given that we maintain the
  ratio between the number of points and the number of intervals, the
  process terminates with an interval of the desired type.  As the
  number of points decreases by a constant factor in each iteration,
  the overall number of comparisons can be bounded by a geometric
  series and the resulting runtime is linear.
\end{proof}

\paragraph{How to find a good direction.}
Our algorithm tries $k$ different directions and stops as soon as it
finds a direction with spread at least $t$ (see
Theorem~\ref{thm-approxPlane}). For a given direction the spread can
be computed in $O(n)$ time using linear time rank 
selection~\cite{blum1973time}. In the remainder of this section, we
will discuss how to select an appropriate set of directions such that
one direction is guaranteed to have spread at least $t$.

For this we need a bound on the number of balls simultaneously within
distance $w_1,\ldots, w_d$ of some hyperplanes $\Hy_1,\ldots,
\Hy_d$. Below we give an easy formula based on a volume argument. This
formula in turn motivates our choice of directions, which we will
explain thereafter.
\begin{lemma}\label{lem:intersection}
  Let $\vec{v}_1,\vec{v}_2,\ldots ,\vec{v}_d \in S^{d-1} \subset \R^d$
  be linearly independent directions and $\Hy_1, \Hy_2, \ldots ,\Hy_d$
  hyperplanes with corresponding normal directions, then the maximal
  number of pairwise disjoint unit balls entirely within distance
  $w_1,\ldots,w_d$ of $\Hy_1, \Hy_2, \ldots ,\Hy_d$, respectively, is
  bounded from above by $$ \frac{2^d w_1\ldots w_d }{| \det \left(
      \vec{v}_1,\ldots,\vec{v}_d \right)| V_d } ,$$ where $V_d$
  denotes the volume of the $d$-dimensional unit ball.
\end{lemma}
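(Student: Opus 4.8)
The plan is a direct volume-packing argument. First I would fix, for each $i$, a representation $\Hy_i = \{x \in \R^d : \langle x, \vec v_i\rangle = c_i\}$ for a suitable scalar $c_i$; since $\|\vec v_i\| = 1$, the Euclidean distance from a point $x$ to $\Hy_i$ equals $|\langle x, \vec v_i\rangle - c_i|$. Hence the set of all points within distance $w_i$ of $\Hy_i$ is the slab $S_i := \{x : |\langle x, \vec v_i\rangle - c_i| \le w_i\}$, and a unit ball lying \emph{entirely} within distance $w_i$ of $\Hy_i$ is contained in $S_i$. Consequently, every ball counted in the lemma is contained in the common region $R := \bigcap_{i=1}^d S_i$.

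Next I would compute $\operatorname{vol}(R)$. Let $A$ be the $d\times d$ matrix whose $i$-th row is $\vec v_i^{\mathsf T}$. Since the $\vec v_i$ are linearly independent, $A$ is invertible and $|\det A| = |\det(\vec v_1,\ldots,\vec v_d)| > 0$. The linear map $x \mapsto Ax$ is a bijection sending $R$ onto the axis-aligned box $\prod_{i=1}^d [c_i - w_i,\, c_i + w_i]$, which has volume $\prod_{i=1}^d 2w_i = 2^d w_1 \cdots w_d$. By the change-of-variables formula, $\operatorname{vol}(R) = (2^d w_1\cdots w_d)/|\det(\vec v_1,\ldots,\vec v_d)|$; in particular $R$ is bounded and has finite volume.

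Finally, suppose $N$ pairwise disjoint unit balls each satisfy the hypotheses. Each of them is contained in $R$ by the first paragraph, and they are pairwise disjoint, so $N \cdot V_d = \sum_{\text{balls}} \operatorname{vol}(\text{ball}) \le \operatorname{vol}(R)$. Rearranging yields $N \le (2^d w_1\cdots w_d)/(|\det(\vec v_1,\ldots,\vec v_d)|\, V_d)$, as claimed. I do not expect a genuine obstacle here; the only points needing care are the observation that ``within distance $w_i$'' applied to an \emph{entire} ball forces containment in the slab $S_i$ (so that the balls themselves, not merely their centers, lie in $R$, which is what legitimizes the clean packing bound), and the standard justification of the parallelepiped volume via $\det A$.
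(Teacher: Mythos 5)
Your proposal is correct and follows essentially the same route as the paper: both arguments observe that every qualifying ball lies in the intersection of the $d$ slabs, compute the volume of that parallelepiped via the linear map determined by $(\vec v_1,\ldots,\vec v_d)$ (picking up the factor $1/|\det(\vec v_1,\ldots,\vec v_d)|$), and then divide by $V_d$ using disjointness. Your write-up is in fact a bit more careful than the paper's, which only sketches the scaling and change-of-variables steps.
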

\begin{proof}
  For each hyperplane $\Hy_i$ consider the region $S_i$ within distance
  $w_i$ of $\Hy_i$. We want to count the number of balls in
  $S := \bigcap_i S_i$. As each ball has volume $V_d$ and
  they are pairwise disjoint, it is sufficient to bound the volume of
  $S$. The volume of $S$ depends linearly on $w_1,\ldots, w_d$, so we
  scale them all to one. We can map the linearly independent vectors
  $(\vec{v}_1,\ldots,\vec{v}_d)$ to the standard basis
  $(e_1,\ldots,e_d)$ by multiplying with the matrix
  $(\vec{v}_1,\ldots,\vec{v}_d)^{-1}$. The volume changes by this
  transformation by a factor of $1/ \det ( \vec{v}_1,\ldots,\vec{v}_d
  )$. After this transformation, $S'$ is a cube with side length
  two.
\end{proof}

The bound in Lemma~\ref{lem:intersection} depends on the determinant
formed by the $d$ direction vectors, which corresponds to the volume
of the $(d-1)$-simplex spanned by them. In order to obtain a good
upper bound, we must guarantee that this volume does not become too
small. Ensuring this reduces to the \emph{Heilbronn Problem}: Given
$k\in\N$ and a compact region $P\subset\R^d$ of unit volume, how can
we select $k$ points from $P$ as to maximize the area of the smallest
$d$-simplex formed by these points? Heilbronn posed this question for
$d=2$, the natural generalization to higher dimension was studied by
Barequet~\cite{barequet2001lower} and
Lefmann~\cite{DBLP:journals/combinatorica/Lefmann03}. We use the
following simple explicit construction in $\R^2$ that goes back to
Erd\H{o}s and was generalized to higher dimension by Barequet.
\begin{lemma}[\cite{barequet2001lower,r-ph-51}]\label{lem:EasyHeilbronn}
  Given a prime $k$, let $P=\{p_0,\ldots,p_{k-1}\}\subset{[0,1]}^d$ with
  \[
  p_i=\frac{1}{k}\left(i,i^2\mathop{\mathrm{mod}} k,\ldots,
    i^{d}\mathop{\mathrm{mod}} k\right).
  \]
  Then the smallest $d$-simplex spanned by $d+1$ points from $P$ has
  volume at least $1/(d!k^d)$.
\end{lemma}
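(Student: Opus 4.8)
The plan is to reduce the volume bound to a determinant computation over the finite field $\Z/k\Z$. First I would recall that the $d$-dimensional volume of the simplex spanned by points $p_{i_0},\dots,p_{i_d}\in\R^d$ equals $\frac{1}{d!}\left|\det\bigl(p_{i_1}-p_{i_0},\dots,p_{i_d}-p_{i_0}\bigr)\right|$, which can equivalently be written as $\frac{1}{d!}\,|\det A|$, where $A$ is the $(d+1)\times(d+1)$ matrix whose $j$-th row is $(1,\,p_{i_j})$ (subtract the first row from the others to see the equivalence). Scaling all points by the factor $k$ multiplies this volume by $k^d$, so it suffices to prove that the \emph{integer} matrix $B$ whose $j$-th row is $(1,\,k\,p_{i_j})$ satisfies $|\det B|\ge 1$; then the volume is $\frac{1}{d!k^d}|\det B|\ge\frac{1}{d!k^d}$, as claimed.

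Next I would reduce $\det B$ modulo $k$. By construction the $\ell$-th coordinate of $k\,p_i$ is $i^{\ell}\bmod k$, hence $k\,p_i\equiv(i,i^2,\dots,i^d)\pmod k$ entrywise, and therefore, modulo $k$,
\[
\det B\equiv\det
\begin{pmatrix}
1 & i_0 & i_0^2 & \cdots & i_0^d\\
1 & i_1 & i_1^2 & \cdots & i_1^d\\
\vdots & \vdots & \vdots & & \vdots\\
1 & i_d & i_d^2 & \cdots & i_d^d
\end{pmatrix}
=\prod_{0\le a<b\le d}(i_b-i_a),
\]
the last equality being the Vandermonde determinant.

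The key point — and the only place where primality of $k$ is used — is the following: the indices $i_0,\dots,i_d$ are pairwise distinct elements of $\{0,1,\dots,k-1\}$, so each difference $i_b-i_a$ is a nonzero integer of absolute value strictly less than $k$ and hence not divisible by $k$. Since $k$ is prime, $\Z/k\Z$ is an integral domain, so the product $\prod_{a<b}(i_b-i_a)$ is nonzero modulo $k$. Consequently $\det B\not\equiv 0\pmod k$; in particular $\det B$ is a nonzero integer, so $|\det B|\ge 1$, and the claimed lower bound on the simplex volume follows.

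I do not expect a genuine obstacle here; the proof is short. The one thing to be careful about is that the entries of the scaled point $k\,p_i$ are the \emph{reduced residues} $i^{\ell}\bmod k\in\{0,\dots,k-1\}$ rather than $i^{\ell}$ itself — this is precisely why the argument must pass through $\Z/k\Z$ and why $k$ is required to be prime (so that a product of nonzero residues remains nonzero). Everything else is the standard Vandermonde identity together with the trivial observation that a nonzero integer has absolute value at least one.
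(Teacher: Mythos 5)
Your proof is correct. The paper itself does not prove this lemma --- it imports it from the cited references (Erd\H{o}s's construction for $d=2$ and Barequet's generalization) --- and your argument is precisely the standard one used there: scale by $k$ to get an integer matrix, reduce the determinant modulo $k$ to a Vandermonde determinant $\prod_{a<b}(i_b-i_a)$, and use primality of $k$ to conclude the determinant is a nonzero integer, hence of absolute value at least $1$. All steps, including the observation that replacing $i^{\ell}$ by $i^{\ell}\bmod k$ does not change the determinant modulo $k$, are sound.
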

Assuming $k$ to be prime is not a restriction: If $k$ is not prime,
then by Bertrand's postulate there is a prime $k'\le 2k$. We can
compute $k'$ efficiently, for instance, in $O(k/\log\log k)$ time
using Atkin's sieve~\cite{ab-psbqf-04}. In order to obtain the desired
direction vectors we proceed as follows: Use
Lemma~\ref{lem:EasyHeilbronn} to generate $k$ points
$p_0,\ldots,p_{k-1}$ in $[0,1]^{d-1}$. Then lift the points to
$S^{d-1}\subset\R^d$ using the map
\[
f:(x_1,\ldots,x_{d-1})\mapsto
\frac{(x_1-\frac12,\ldots,x_{d-1}-\frac12,\frac12)}{||(x_1-\frac12,\ldots,x_{d-1}-\frac12,\frac12)||}
\]
and denote the resulting set of directions by
$D=\{\vec{v}_0,\ldots,\vec{v}_{k-1}\}$ with $\vec{v}_i=f(p_i)$.

\begin{restatable}{lemma}{lemDeterminant}\label{lem:determinant}
  For any $d$ vectors $\vec{v}_{i_1},\ldots,\vec{v}_{i_d}$ from $D$ we
  have $|\det(\vec{v}_{i_1},\ldots,\vec{v}_{i_d})|\ge
  2^{d-1}/((d-1)!d^{d/2} k^{d-1})$.
\end{restatable}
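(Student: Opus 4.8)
The plan is to pull the unit vectors $\vec{v}_{i_j}=f(p_{i_j})$ back to their unnormalised preimages and exploit that all of these preimages share the same last coordinate $\tfrac12$. For each $i$ I would set $q_i:=\bigl(x_1-\tfrac12,\dots,x_{d-1}-\tfrac12,\tfrac12\bigr)$, where $(x_1,\dots,x_{d-1})=p_i$, so that $\vec{v}_i=q_i/\|q_i\|$. By multilinearity of the determinant in its columns,
\[
|\det(\vec{v}_{i_1},\dots,\vec{v}_{i_d})|=\frac{|\det(q_{i_1},\dots,q_{i_d})|}{\|q_{i_1}\|\cdots\|q_{i_d}\|},
\]
so it suffices to bound $|\det(q_{i_1},\dots,q_{i_d})|$ from below and each $\|q_{i_j}\|$ from above.

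Bounding the norms is immediate: the first $d-1$ coordinates of $q_i$ lie in $[-\tfrac12,\tfrac12]$ since $p_i\in[0,1]^{d-1}$, and the last equals $\tfrac12$, so $\|q_i\|^2\le(d-1)\tfrac14+\tfrac14=\tfrac d4$ and hence $\prod_j\|q_{i_j}\|\le(\sqrt d/2)^d=d^{d/2}/2^d$.

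For the numerator I would use the shared last coordinate. Writing $q_i=(r_i,\tfrac12)$ with $r_i=(x_1-\tfrac12,\dots,x_{d-1}-\tfrac12)\in\R^{d-1}$, subtracting the first column of $(q_{i_1},\dots,q_{i_d})$ from each of the others and expanding along the resulting last row (all of whose entries except the first vanish) yields $|\det(q_{i_1},\dots,q_{i_d})|=\tfrac12\,|\det(r_{i_2}-r_{i_1},\dots,r_{i_d}-r_{i_1})|$, that is, $\tfrac12\,(d-1)!$ times the $(d-1)$-dimensional volume of the simplex $\mathrm{conv}(r_{i_1},\dots,r_{i_d})$. Since the $r_i$ form a common translate of the $p_i$, this volume equals that of $\mathrm{conv}(p_{i_1},\dots,p_{i_d})$, which by Lemma~\ref{lem:EasyHeilbronn} -- invoked with its parameter set to $d-1$, as the $p_j$ were generated by that construction in $[0,1]^{d-1}$ -- is at least $1/((d-1)!\,k^{d-1})$. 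Therefore $|\det(q_{i_1},\dots,q_{i_d})|\ge 1/(2k^{d-1})$.

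Combining the two estimates gives
\[
|\det(\vec{v}_{i_1},\dots,\vec{v}_{i_d})|\ge\frac{1/(2k^{d-1})}{d^{d/2}/2^d}=\frac{2^{d-1}}{d^{d/2}k^{d-1}}\ge\frac{2^{d-1}}{(d-1)!\,d^{d/2}k^{d-1}},
\]
the claimed bound (in fact marginally stronger). I expect the only step needing genuine care to be the determinant reduction: correctly tracking the factor $\tfrac12$ contributed by the common coordinate, the factor $(d-1)!$ converting a $(d-1)\times(d-1)$ determinant into a simplex volume, and applying Lemma~\ref{lem:EasyHeilbronn} with its dimension shifted from $d$ to $d-1$. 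Everything else is routine.
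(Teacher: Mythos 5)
Your proof is correct and follows essentially the same route as the paper's: normalize the lifted vectors, bound each norm by $\sqrt{d}/2$, and reduce the numerator determinant to the $(d-1)$-dimensional volume of the simplex spanned by the $p_{i_j}$, which is bounded below via Lemma~\ref{lem:EasyHeilbronn} applied in dimension $d-1$. Your bookkeeping of the factor $(d-1)!$ relating that determinant to the simplex volume is in fact more careful than the paper's (which identifies the two and thereby forgoes a factor of $(d-1)!$), so the bound $2^{d-1}/(d^{d/2}k^{d-1})$ you obtain is, as you note, marginally stronger than the one claimed.
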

\begin{proof}
  Let $p_j=(x_{j,1},\ldots,x_{j,d-1})$, for $j\in\{0,\ldots,d\}$. Then 
  \begin{eqnarray*}
    |\det(\vec{v}_{i_1},\ldots,\vec{v}_{i_d})| &=&
    \left|\det\left(
        \begin{array}{ccc}
          x_{i_1,1}-\frac12   & \hdots & x_{i_d,1}-\frac12\\
          \vdots    &  \ddots & \vdots\\
          x_{i_1,d-1}-\frac12 & \hdots  & x_{i_d,d-1}-\frac12\\
          \frac12         & \hdots  & \frac12 
        \end{array} 
      \right)\right| \prod_{j=1}^d
    \frac{1}{||(x_{i_j,1}-\frac12,\ldots,x_{i_j,d-1}-\frac12,\frac12)||}\\
    &=& \frac12\left|\det\left(
        \begin{array}{ccc}
          x_{i_1,1}   & \hdots & x_{i_d,1}\\
          \vdots    &  \ddots & \vdots\\
          x_{i_1,d-1} & \hdots  & x_{i_d,d-1}\\
          1        & \hdots  & 1
        \end{array} 
      \right)\right| \prod_{j=1}^d
    \frac{1}{||(x_{i_j,1}-\frac12,\ldots,x_{i_j,d-1}-\frac12,\frac12)||},
  \end{eqnarray*}
  where the determinant on the previous line describes the volume of
  the $(d-1)$-simplex spanned by $p_{i_1},\ldots,p_{i_d}$. According to
  Lemma~\ref{lem:EasyHeilbronn} this determinant is bounded by
  $1/((d-1)!k^{d-1})$ from below. Also note that all $p_i$ are in the unit
  cube and so all coordinates of the vector
  $(x_{i_j,1}-\frac12,\ldots,x_{i_j,d-1}-\frac12,\frac12)$ are between
  $-1/2$ and $1/2$. It follows that
  \[
  |\det(\vec{v}_{i_1},\ldots,\vec{v}_{i_d})| \ge
  \frac{1}{2(d-1)!k^{d-1}}\prod_{j=1}^d\frac{1}{\sqrt{d/4}} =
  \frac{2^{d-1}}{(d-1)!d^{d/2}k^{d-1}}. \qedhere
  \]
\end{proof}

\noindent
We are now ready to prove Theorem~\ref{thm-approxPlane}.
\begin{proof}
  The algorithm goes as follows. Compute directions
  $\vec{v}_1,\ldots,\vec{v}_k$ as in Lemma~\ref{lem:determinant}. For
  each $i\in\{1,\ldots,k\}$ consider the sequence of center points of
  the disks in $\D$, sorted according to direction $\vec{v}_i$, and
  denote by $S_i$ the middle $b$ points in this order (rank $(n-b)/2$
  up to $(n+b)/2$). We can bound
  \[
  kb=\sum_{i=1}^k |S_i| \leq (d-1)n + \sum_{i_1<\ldots <i_d}|S_{i_1}\cap
  \ldots \cap S_{i_d}|,
  \]
  noting that a point that is contained in at most $d-1$ sets $S_i$ is
  counted $d-1$ times on the right hand side, whereas a point that is
  contained in $a\ge d$ sets is counted $d-1+\binom{a}{d}\ge a$ times.

  Denote by $w_i$ the width of $S_i$ in direction $\vec{v}_i$ (which
  is the spread of $\vec{v}_i$). We claim that $w_i\ge t$, for some
  $i\in\{1,\ldots,k\}$. 

  For the purpose of contradiction assume $w_i < t$, for all
  $i\in\{1,\ldots,k\}$. Together with Lemma~\ref{lem:intersection} and
  Lemma~\ref{lem:determinant} we get
  \begin{eqnarray*}
    kb &=& \sum_{i=1}^k |S_i|\le (d-1)n + \sum_{i_1<\ldots <i_d} \frac{2^d
      w_{i_1}\ldots w_{i_d}}{| \det \left(
        \vec{v}_{i_1},\ldots,\vec{v}_{i_d} \right)| V_d }\\
    &<& (d-1)n +
    \sum_{i_1<\ldots <i_d} \frac{2^dt^d}{V_d}\frac{(d-1)!d^{d/2}k^{d-1}}{2^{d-1}}\\
    &=& (d-1)n + \binom{k}{d}\frac{2t^d (d-1)!d^{d/2}k^{d-1}}{V_d}\\
    &\le& (d-1)n + \frac{2d^{(d-2)/2}}{V_d}t^dk^{2d-1}.
  \end{eqnarray*}
  In combination with Condition~(\ref{cond:t1}) we get
  \[
  dn \le kb < (d-1)n + \frac{2d^{(d-2)/2}}{V_d}t^dk^{2d-1}
  \]
  and so
  \[
  t^d > \frac{V_d}{2d^{(d-2)/2}}\frac{n}{k^{2d-1}},
  \]
  in contradiction to the definition of $t$ in
  Condition~(\ref{cond:t2}). Therefore, our assumption $w_i < t$, for
  all $i\in\{1,\ldots,k\}$, was wrong and there is some $w_j\ge t$.

  Using Lemma~\ref{lem:Selectpoint} on the set $S_j$ projected to a
  line in direction $\vec{v}_j$ we obtain a hyperplane $\Hy$
  orthogonal to $\vec{v}_j$ that intersects at most $2b/(w_j-2)\le
  2b/(t-2)$ balls from $\D$. By Lemma~\ref{lem:Selectpoint} the
  hyperplane $\Hy$ has distance greater than one to any disk in $\D$
  whose center is not in $S_j$, and so $\Hy$ is the desired separator.
  
  Regarding the runtime bound, as stated above we can compute the
  spread of any direction in $O(n)$ time, which yields $O(kn)$ time
  for $k$ directions. The second step of finding $\Hy$ can be done in
  $O(b)=O(n)$ time by Lemma~\ref{lem:Selectpoint}. Therefore the
  overall runtime is $O(kn)$.
\end{proof}


\section{A deterministic linear time algorithm in the plane}
\label{sec:computing_a_halving_line_with_few_intersections}

In this section we describe a deterministic linear time algorithm to
construct a halving line $\ell$ for a given set $\D$ of $n$ disks in
the plane. The line $\ell$ bisects $\D$ perfectly (at most $n/2$
centers lie on either side) and it intersects at most $O(n^c)$ disks,
where $c$ may be chosen arbitrarily close to $5/6$. We may assume that
$n$ is odd: If $n$ is even, remove one arbitrary disk and observe that
any halving line for the resulting set of disks is also a halving line
for the original set. As our algorithm works in the dual arrangement,
we first briefly review this duality and how it applies to line-disk
intersections.

\paragraph{Point-line duality.} The standard duality transform maps a
point $p=(p_x,p_y)$ to the line $p^* \colon y=p_x x - p_y$ and a
non-vertical line $g:y=mx+b$ to the point $g^*=(m,-b)$. This
transformation is both incidence preserving ($p\in g\iff g^*\in p^*$)
and order preserving ($p$ is above $g$ $\iff$ $g^*$ is above
$p^*$). Given a set $P$ of points in the plane, the dual arrangement
$\mathcal{A}(P^*)$ is defined by the lines in $P^*=\{p^* \mid p\in
P\}$. In order to avoid parallel lines we assume that no two points in
$P$ have the same $x$-coordinate (which can be achieved by an
infinitesimal rotation of the plane).

A halving line $\ell$ for $P$ corresponds to a point $\ell^*$ in the
dual arrangement that has no more than half of the lines from $P^*$
above it and no more than half of the lines below it. The set of these
points is referred to as the \emph{median level} of the arrangement
induced by $P^*$. Since $n$ is odd, for any $x$-coordinate there is
exactly one such point, and so we can regard the median level as a
function from $\R$ to $\R$. The following lemma characterizes
line-disk intersections in the dual plane. 
\begin{restatable}{lemma}{dualLineDisk}
  \label{lem:diskintersection}
  Let $\ell:y = mx + b$ be a non-vertical line and let $p$ denote the
  center of a unit disk $D$. Then $D$ intersects $\ell$ if and only if
  the line $p^*$ intersects the vertical segment $s = [(m,-b
  -\sqrt{m^2 + 1}), (m,-b + \sqrt{m^2 + 1})]$.
\end{restatable}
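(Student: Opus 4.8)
The plan is to collapse both directions of the stated equivalence to the single scalar inequality
$|m p_x - p_y + b| \le \sqrt{m^2+1}$, where $p=(p_x,p_y)$.

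First I would treat the primal (geometric) side. Rewriting $\ell$ in the form $mx - y + b = 0$, the Euclidean distance from the point $p$ to the line $\ell$ equals $|m p_x - p_y + b|/\sqrt{m^2+1}$. Since $D$ is the closed unit disk centered at $p$, it intersects $\ell$ precisely when this distance is at most $1$, that is, precisely when $|m p_x - p_y + b| \le \sqrt{m^2+1}$.

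Next I would treat the dual side. The line $p^*$ has equation $y = p_x x - p_y$, and being non-vertical it meets the vertical line $x = m$ in exactly one point, namely $(m,\, m p_x - p_y)$. The segment $s$ consists of the points $(m,y)$ with $-b-\sqrt{m^2+1} \le y \le -b+\sqrt{m^2+1}$, so $p^*$ intersects $s$ if and only if $-b-\sqrt{m^2+1} \le m p_x - p_y \le -b+\sqrt{m^2+1}$; adding $b$ throughout, this is exactly $|m p_x - p_y + b| \le \sqrt{m^2+1}$. Chaining the two equivalences proves the lemma.

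I do not expect a genuine obstacle here; the argument is a short computation. The only points worth stating with care are that ``$D$ intersects $\ell$'' refers to the \emph{closed} disk, so that both reformulations use non-strict inequalities and therefore match up on the boundary case, and that it is exactly the non-verticality of $p^*$ that lets us replace ``$p^*$ meets the vertical segment $s$'' by a condition on the single $y$-coordinate $m p_x - p_y$.
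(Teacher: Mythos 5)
Your proof is correct. It takes a mildly different route from the paper's: the paper argues structurally through the duality, introducing the two lines $\ell_a,\ell_b$ parallel to $\ell$ at Euclidean distance $1$, observing via the order-preserving property that ``$p$ lies between $\ell_a$ and $\ell_b$'' dualizes to ``$p^*$ crosses the vertical segment $\ell_a^*\ell_b^*$ at $x=m$'', and then computing the vertical offset $\sqrt{1+m^2}$ by a (somewhat garbled) right-triangle argument. You instead collapse both sides of the equivalence to the single inequality $|mp_x-p_y+b|\le\sqrt{m^2+1}$, using the point-to-line distance formula on the primal side and a direct evaluation of $p^*$ at $x=m$ on the dual side. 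The paper's version makes visible the geometric picture (the segment $s$ is the dual of the closed strip of width $2$ around $\ell$), which is reused later when $1$-tubes are defined; your version is shorter, entirely computational, and avoids the paper's slightly error-prone triangle calculation. Your closing remarks---that the disk is closed so both inequalities are non-strict, and that non-verticality of $p^*$ reduces ``meets the segment'' to a condition on one $y$-coordinate---are exactly the right points to flag.
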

\begin{proof}
  Consider $\ell$ and the two lines $\ell_a$ (above) and $\ell_b$
  (below) at distance $1$ from $\ell$ (\figurename~\ref{fig:dual}).
  Then $D$ intersects $\ell$ if and only if $p$ is below $\ell_a$ and
  above $\ell_b$. Equivalently, in the dual, $D$ intersects $\ell$ if
  and only if $p^*$ intersects the vertical line segment
  $\ell_a^*\ell_b^*$ at $x=m$. It remains to calculate the
  $y$-coordinates of the endpoints of $\ell_a^*\ell_b^*$.
  \begin{figure}[htbp]
    \centering
    \includegraphics{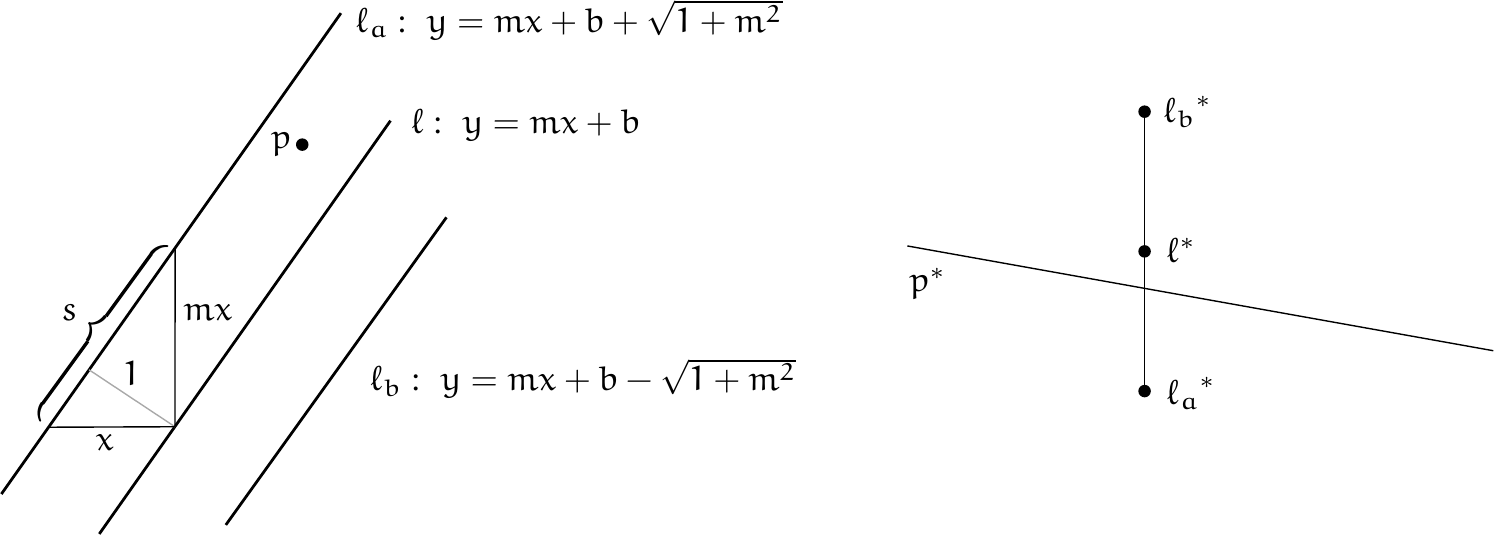}
    \caption{When does a line $\ell$ intersect a unit disk centered at
      $p$?\label{fig:dual}}
  \end{figure}

  Consider a right-angled triangle $T$ for which one side determines
  the horizontal distance $x$ and another side determines the vertical
  distance $mx$ between $\ell$ and $\ell_a$. Denote the length of the
  third side of $T$ by $s$. Then the area of $T$ is
  $\frac{1}{2}s=\frac{1}{2}mx^2$. By Pythagoras we have
  $s^2=x^2(1+m^2)$, which together yields $1+m^2={(mx)}^2$, and so
  $mx=\sqrt{1+m^2}$.
\end{proof}

\noindent
If we view Lemma~\ref{lem:diskintersection} from the perspective of a
unit disk $D$ with center $p$, then the set of lines that intersect
$D$ dualizes to the set of points $(x,y)$ whose vertical distance to
$p^*$ is at most $\sqrt{1+x^2}$. We call this closed region of points
the (dual) \emph{$1$-tube} of $D$ (figurename~\ref{fig:1tube}). Note
that the function $\sqrt{1 + x^2}$ is strictly convex and so the
$1$-tube is bounded by a strictly convex function from above and by a
strictly concave function from below.
\begin{figure}[htbp]
  \centering
  \includegraphics[scale=1.3,trim=20 20 20 20]{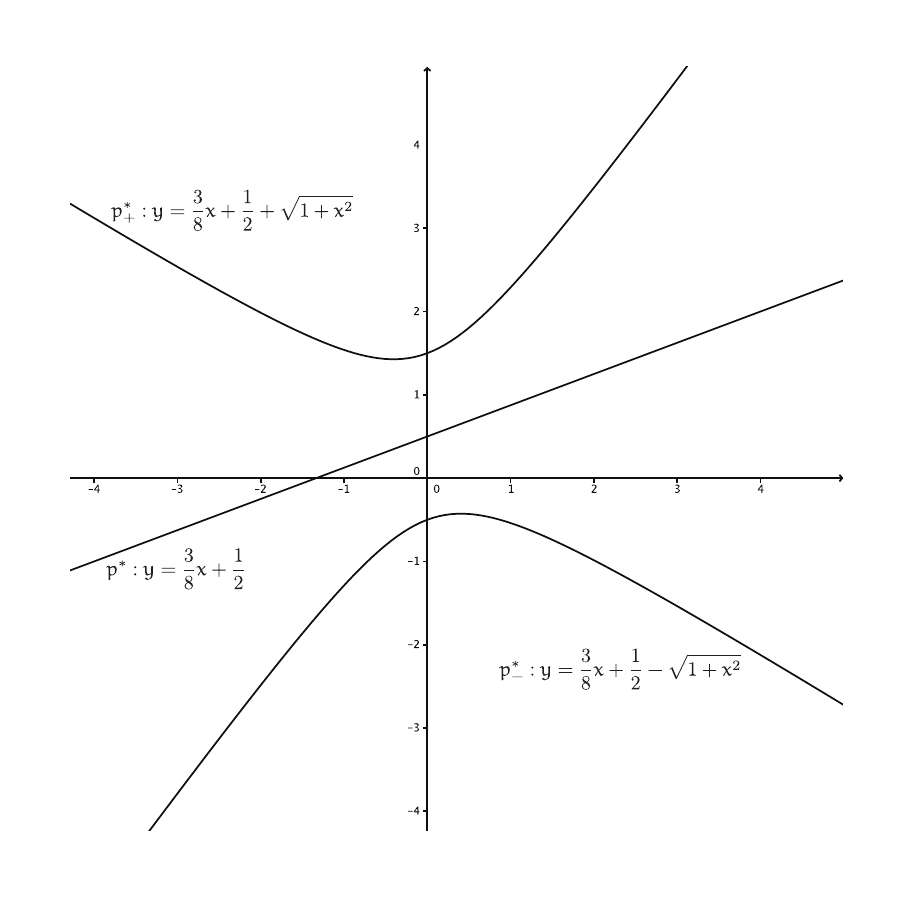}
  \caption{The $1$-tube of the disk centered at $p=(3/8,-1/2)$. It is
    bounded from below by the function $p^*_{-}=p^*-\sqrt{1+x^2}$ and
    from above by $p^*_{+}=p^*+\sqrt{1+x^2}$.}
  \label{fig:1tube}
\end{figure}

\paragraph{Overview of the algorithm.} The algorithm works in the dual
arrangement and follows the prune and search paradigm. At the
beginning we consider all potential halving lines, but subsequently
narrow the range of potential slopes for the desired halving
line. Recall that in the dual a halving line appears as a point on the
median level, whose x-coordinate corresponds to the slope of the
(primal) line.

The successive narrowing of the range of slopes under consideration is
made explicit by a parameter $S$, denoting the closed region bounded
by at most two vertical lines. Such a region we call a \textit{slab}.
A slab $S=\{(x,y)\in\R^2\colon \ell\le x\le r\}$ we denote by
$S={<}\ell,r{>}$. The distance $r-\ell$ between the two bounding
vertical lines is the \textit{width} of $S$. By Alon et
al.~\cite{DBLP:journals/dcg/AlonKP89} we may start with $S={<}0,1{>}$
as an initial slab, that is, there is always a halving line that
intersects few disks and whose slope is between zero and one.

Crucial for the linear runtime bound is that a constant fraction of
all lines from $L$ be discarded after each iteration. However, by
discarding some lines also our level of interest---which is the median
level of the original set of lines---changes. Therefore this level
also appears as a parameter of the algorithm. We denote this parameter
by $\lambda\in\{1,2,\ldots,|L|\}$. Initially $\lambda=\lceil
n/2\rceil$.

We first describe a single iteration of the algorithm, then prove some
bounds for the parameters, and finally present the analysis of the
whole algorithm.

\paragraph{A single iteration.} At the beginning of each iteration we
have a set $L$ of $n$ lines, a slab $S={<}\ell,r{>}$ of width
$w=r-\ell$, and a level parameter $\lambda$. Our goal is to find a
constant fraction of lines from $L$ that can be discarded. The outline
of an iteration step is as follows.
\begin{enumerate}
\item Divide $S$ in constantly many slabs $S_1,\ldots,S_m$, such that each
  contains at most $\alpha\binom{n}{2}$ many vertices of the arrangement
  $\mathcal{A}(L)$, for some appropriate constant $0<\alpha<1$. We define
  $S_i={<}\ell_i,r_i{>}$ and $w_i=r_i-\ell_i$.
\item For each slab $S_i$, construct a \emph{trapezoid} $T_i\subseteq
  S_i$ such that $T_i$ contains the $\lambda$-level of
  $\mathcal{A}(L)$ within $S_i$ and at most half of the lines from $L$
  intersect $T_i$.
\item\label{step:count} For each trapezoid $T_i$, define its
  \emph{$1$-tube} $\tau_i\supset T_i$ as follows: Consider the two
  lines $a_i$ and $b_i$ passing through the segment bounding $T_i$
  from above and below, respectively; then $\tau_i$ is defined as the
  closed subset of $S_i$ that is bounded by the upper boundary of the
  $1$-tube of $a_i$ from above and by the lower boundary of the
  $1$-tube of $b_i$ from below (\figurename~\ref{fig:trapetube}).

  For each slab $S_i$ and some parameter $\gamma\in(0,1/2)$, define
  the \emph{$\gamma$-core} $\mathrm{C}_{\gamma}$ of $S_i$ to be the
  central $(1-2\gamma)$-section of $S_i$, that is,
  $\mathrm{C}_{\gamma}(S_i)={<}\ell_i+\gamma w_i ,r_i-\gamma w_i{>}$.
 
  For each slab $S_i$, count the number $n_i$ of lines that intersect
  $\tau_i$ within $\mathrm{C}_\gamma(S_i)$.
\item\label{step:discard} Select (in a way to be described) one of the
  slabs $\mathrm{C}_\gamma(S_i)$ to continue the search with. Discard
  all lines from $L$ that do not intersect $\tau_i$ within
  $\mathrm{C}_\gamma(S_i)$ and adjust $\lambda$ accordingly (decrease
  by the number of lines discarded that are below $\tau_i$).
\end{enumerate}
\begin{figure}[htbp]
  \centering\includegraphics{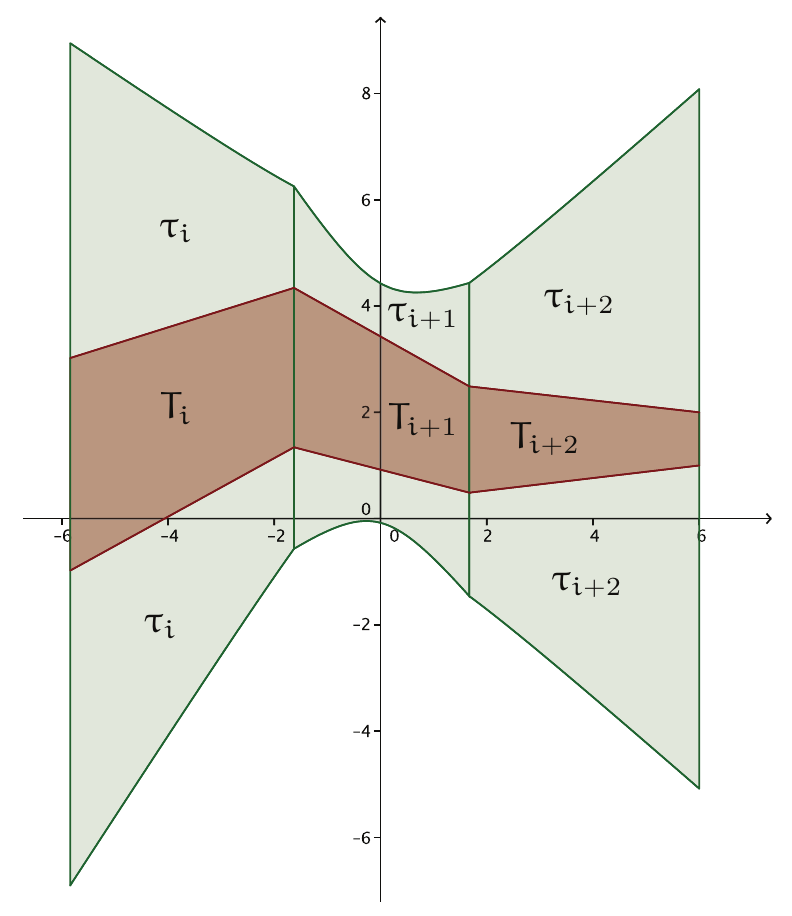}
  \caption{Three consecutive trapezoids $T_i,T_{i+1},T_{i+2}$ (shown
    in dark red) and their encompassing $1$-tubes
    $\tau_i,\tau_{i+1},\tau_{i+2}$, respectively (shown in light
    green). In our algorithm, all trapezoids are contained in
    ${<}0,1{>}$; in this figure they are spread out further so as to
    emphasize the convex/concave boundary of the $1$-tubes (which would be hard to
    recognize, otherwise).
    \label{fig:trapetube}}
\end{figure}

Observe first that discarding lines as described in
Step~\ref{step:discard} is justified: A line $\ell\in L$ that does not
intersect $\tau_i$ within $\mathrm{C}_{\gamma}(S_i)$ by
Lemma~\ref{lem:diskintersection} corresponds to a unit disk centered
at $\ell^*$ that within $\mathrm{C}_{\gamma}(S_i)$ is not intersected
by any line whose dual point lies on the $\lambda$-level of
$\mathcal{A}(L)$.

Next we will detail the steps listed above and analyze their
runtime. For the first two steps we apply the machinery due to Lo et
al.~\cite{DBLP:journals/dcg/LoMS94}. The first step can be handled in
linear time using the following lemma, which follows from Lemma~3.3 of
Lo et al.\ with $\alpha=1/32$.
\begin{lemma}[\cite{DBLP:journals/dcg/LoMS94}]
  Let $L$ be a set of $n$ lines in the plane in general
  position\footnote{Any two intersect in exactly one point.} and let
  $S$ be a slab. In $O(n)$ time $S$ can be subdivided into subslabs
  $S_1, S_2,\ldots,S_m\subset S$ (for some constant $m\le 64$), such
  that each $S_i$ contains at most $\frac{1}{32}\binom{n}{2}$ of the
  $\binom{n}{2}$ vertices of $\mathcal{A}(L)$.
\end{lemma}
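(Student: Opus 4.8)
The plan is to reduce the statement to a question about cutting a planar point set by vertical lines and then to invoke a derandomized version of the standard random-sampling / $\epsilon$-approximation machinery for arrangements, following Lo et al.

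First I would record the combinatorial core. Fix a vertical slab with boundaries $x=a<x=b$. Since $L$ is in general position, any two of its lines cross exactly once in the whole plane, so a vertex of $\mathcal{A}(L)$ lies strictly inside this slab if and only if the two lines defining it occur in opposite $y$-orders along $x=a$ and along $x=b$. Hence, writing $\sigma_a$ and $\sigma_b$ for the orders of the lines by $y$-coordinate at $x=a$ and at $x=b$, the number of vertices of $\mathcal{A}(L)$ inside the slab equals the number of inversions between $\sigma_a$ and $\sigma_b$. In particular the number of vertices of $\mathcal{A}(L)$ to the left of a sweeping vertical line $x=c$ is a monotone step function of $c$ that increases from $0$ to the total number $N\le\binom{n}{2}$ of vertices in $S$; $63$ vertical cuts placed at its $\tfrac{i}{64}$-quantiles, $i=1,\dots,63$, would already split $S$ into $64$ subslabs with at most (roughly) $N/64\le\tfrac{1}{32}\binom{n}{2}$ vertices each. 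So it suffices to locate such quantiles, or a good enough approximation of them, in $O(n)$ time.

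Because the vertex set of $\mathcal{A}(L)$ is quadratic and cannot be enumerated, I would instead generate candidate cut positions from a sample: pick $R\subseteq L$ of constant size $r$, take the $x$-coordinates of the at most $\binom{r}{2}$ vertices of $\mathcal{A}(R)$, and sort them, which partitions $S$ into $O(1)$ mini-slabs in constant time. Standard random-sampling bounds for arrangements of lines say that for a suitable constant $r$ a ``good'' sample $R$ makes every mini-slab contain at most a prescribed small constant fraction, say $\tfrac{1}{128}\binom{n}{2}$, of the vertices of $\mathcal{A}(L)$ — the crossings among the sampled lines approximate, for the (bounded-VC-dimension) range space of vertical slabs over the vertex set, the crossings among all lines. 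A good sample of constant size exists; to keep everything deterministic I would instead build a constant-size $\epsilon$-approximation explicitly by a deterministic subsampling procedure, which runs in $O(n)$ time since $r$ is a constant.

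Finally I would greedily merge consecutive mini-slabs into at most $64$ groups so that each group still carries at most $\tfrac{1}{32}\binom{n}{2}$ vertices of $\mathcal{A}(L)$; this is possible because each single mini-slab is light and the bound $64\cdot\tfrac{1}{32}=2>1$ leaves ample slack, and executing the merge only requires counting the vertices of $\mathcal{A}(L)$ in a constant number of slabs. I expect the main obstacle — and the actual content of Lemma~3.3 of Lo et al. — to be exactly this counting: the counts (and the underlying orders $\sigma_a,\sigma_b$) must be obtained in $O(n)$ rather than the naive $O(n\log n)$. This is handled by never sorting $L$ exactly at a vertical line but working with the order induced by the constant-size sample, and by observing that a constant-factor estimate of each inversion count already suffices. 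Feeding $\alpha=1/32$ into their bound then yields the claimed $m\le 64$ subslabs, each with at most $\tfrac{1}{32}\binom{n}{2}$ vertices, in $O(n)$ time overall.
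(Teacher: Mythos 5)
First, a point of reference: the paper does not prove this statement at all---it is imported verbatim from Lo, Matou\v{s}ek and Steiger (their Lemma~3.3 with $\alpha=1/32$), so there is no in-paper argument to compare against. Judged on its own, your reconstruction has the right skeleton: vertices of $\mathcal{A}(L)$ inside a vertical slab are exactly the inversions between the $y$-orders at its two walls, so it suffices to place $O(1)$ vertical cuts near quantiles of the $x$-distribution of the $\binom{n}{2}$ crossings. But the two load-bearing steps are precisely the ones you leave as assertions. The claim that a constant-size sample $R\subseteq L$ makes every mini-slab between consecutive crossings of $\mathcal{A}(R)$ contain at most $\epsilon\binom{n}{2}$ crossings of $\mathcal{A}(L)$ is \emph{not} an instance of the basic $\epsilon$-net/$\epsilon$-approximation theorem: the ground set is the implicit quadratic set of crossings, while you sample \emph{lines}, so the sampled crossings are highly dependent (a derived or ``product'' range space). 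Establishing this needs a second-moment, Chazelle--Friedman-type analysis, and doing it deterministically in $O(n)$ time is itself a substantial result (essentially the derandomized slope-selection machinery); writing ``build a constant-size $\epsilon$-approximation explicitly by a deterministic subsampling procedure'' names the difficulty rather than resolving it.

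The second gap is the counting needed for your greedy merge. As written, it ``requires counting the vertices of $\mathcal{A}(L)$ in a constant number of slabs,'' i.e., counting inversions between two $n$-element permutations, which is not known to be doable exactly in $O(n)$ time; and a constant-size sample certainly cannot yield a \emph{multiplicative} constant-factor estimate of such a count (it cannot even distinguish $0$ inversions from $n$ of them, since all $n$ inversions may involve only unsampled lines). What would suffice, and what a genuine $\epsilon$-approximation for the derived range space would deliver, is an \emph{additive} $\epsilon\binom{n}{2}$ estimate obtained by counting sample crossings only and rescaling---but you never close this loop, instead gesturing at ``the order induced by the constant-size sample.'' You correctly identify this counting problem as ``the actual content of Lemma~3.3 of Lo et al.,'' and that is exactly the part your proposal does not prove.
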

The trapezoids mentioned in the second step can be computed as
follows. For $S_i={<}\ell_i,r_i{>}$, let the upper left (right) corner
of $T_i$ be defined by the $(\lambda+n/8)$-level of $\mathcal{A}(L)$
at $x=\ell_i$ ($x=r_i$). Analogously, the lower corners of $T_i$ are
defined by the $(\lambda-n/8)$-level of $\mathcal{A}(L)$ at $x=\ell_i$
($x=r_i$). Then Lemma~3.5 from the paper by Lo et al.\ (with
$\delta=1/8$) gives the following:
%
\begin{lemma}[\cite{DBLP:journals/dcg/LoMS94}]
  \label{lem:bound_trapezoid}
  The trapezoid $T_i$ contains the $\lambda$-level of $\mathcal{A}(L)$
  within $S_i$ and at most half of the lines from $L$ intersect $T_i$.
\end{lemma}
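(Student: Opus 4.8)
This lemma is a specialisation of Lemma~3.5 of Lo et al.~\cite{DBLP:journals/dcg/LoMS94} with their parameter $\delta$ set to $1/8$, and the plan is to recall why the construction works. It is convenient to introduce the two lines $a_i$ and $b_i$ supporting the top and bottom edges of $T_i$: the line $a_i$ passes through the two points of the $(\lambda+n/8)$-level of $\mathcal{A}(L)$ over $x=\ell_i$ and over $x=r_i$, and $b_i$ is defined symmetrically from the $(\lambda-n/8)$-level.

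For the containment claim, the leverage comes from the preceding lemma, which guarantees that $S_i$ carries at most $\frac{1}{32}\binom{n}{2}$ vertices of $\mathcal{A}(L)$; hence the relative order of the lines of $L$ cannot change much across $S_i$, and so every level of $\mathcal{A}(L)$ stays close to the chord joining its two endpoint values over $S_i$. At both $x=\ell_i$ and $x=r_i$ the $\lambda$-level lies exactly $n/8$ levels below the $(\lambda+n/8)$-level, hence below $a_i$, and exactly $n/8$ levels above the $(\lambda-n/8)$-level, hence above $b_i$; the $n/8$ slack, together with the vertex bound, is what keeps the $\lambda$-level sandwiched between $a_i$ and $b_i$ throughout $S_i$. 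Over the two vertical edges of $T_i$ the containment is immediate from the definition of the corners. Hence $T_i$ contains the portion of the $\lambda$-level inside $S_i$.

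For the bound on the number of lines meeting $T_i$, I would partition those lines into three groups: lines meeting the left edge, lines meeting the right edge, and lines that traverse $T_i$ from its top edge to its bottom edge without touching either vertical edge. The left edge is the vertical segment over $x=\ell_i$ spanning the levels from $\lambda-n/8$ to $\lambda+n/8$, so exactly the $2\cdot(n/8)=n/4$ lines occupying that window at $x=\ell_i$ meet it, and likewise $n/4$ lines meet the right edge; the value $\delta=1/8$ is chosen precisely so that $n/4+n/4=n/2$. A line of the third group crosses both $a_i$ and $b_i$ inside $S_i$, and the vertex bound for $S_i$ together with the $n/8$ slack keeps that group small enough to fit inside the $n/2$ budget. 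Combining the two claims yields the lemma.

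The main obstacle --- and the reason to invoke Lo et al.\ rather than reprove everything --- is to turn the informal statement ``the order of the lines changes little across a slab with few vertices'' into the two precise quantitative facts used above: that the $\lambda$-level never leaves the trapezoid, and that no more than $n/2$ lines meet it once the top-to-bottom crossers are accounted for. The remaining ingredients (the rank computations at the two vertical edges and the elementary geometry of levels and their chords) are routine.
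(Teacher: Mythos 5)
The paper itself offers no proof of this lemma: it is imported wholesale from Lo, Matou\v{s}ek and Steiger as their Lemma~3.5 instantiated with $\delta=1/8$, after defining the corners of $T_i$ on the $(\lambda\pm n/8)$-levels. You identify the source, the parameter choice, and the construction of $T_i$ correctly, so in that sense your proposal matches the paper exactly. However, the supplementary sketch you add for the second claim does not close as written. You count $n/4$ lines through the left vertical edge and $n/4$ through the right one, which already exhausts the entire $n/2$ budget, and then assert that the third group (lines traversing $T_i$ from the top edge to the bottom edge without meeting either vertical edge) also ``fits inside the $n/2$ budget.'' That is arithmetically impossible unless that group is empty or entirely contained in the first two, and you give no argument for either. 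To make the count work one must do one of two things: show that the vertex bound $\frac{1}{32}\binom{n}{2}$ forces the top-to-bottom crossers to be few enough that the \emph{union} of all three groups (the left- and right-edge groups overlap substantially) is at most $n/2$, or show such crossers cannot occur at all --- a line going from above $a_i$ at $x=\ell_i$ to below $b_i$ at $x=r_i$ changes rank by more than $n/4$ and hence carries many arrangement vertices inside $S_i$, which is exactly the quantitative trade-off between $\delta$ and the vertex fraction that Lemma~3.5 of Lo et al.\ encapsulates. Since both you and the paper ultimately delegate that computation to the cited lemma, this is a soft spot in your optional sketch rather than a defect of the proof strategy, but as a standalone argument the line count is not established.
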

All these trapezoids can be constructed in a brute-force manner in
$O(n)$ time (recall that $m$ is constant). This completes the first
two steps: we have computed (in linear time) a subdivision of our
initial slab $S$ into $m\le 64$ subslabs $S_i$, each of which contains
a trapezoid $T_i$ that contains the $\lambda$-level of
$\mathcal{A}(L)$ within $S_i$ and at most half of the lines from $L$
intersect $T_i$.

Regarding Step~\ref{step:count}, note that testing whether a given
line intersects $\tau_i$ is a geometric predicate of constant
algebraic degree and, therefore, can be done in constant time. Hence
this step can be executed in a straightforward manner in $O(mn)=O(n)$
time. It remains to argue how to select an appropriate slab to
continue with in Step~\ref{step:discard}. It turns out that not only
the number of lines matters, but it is also important to ensure that
the width of the slab does not become too small. The following lemma
gives a precise account for the bounds we are after.
\begin{lemma}\label{lem:iteration}
  For any $0<\varepsilon<1/2$ and $0<\gamma<1/2$ there exist an
  integer $n'>0$ and constants $m\le 64$ and
  $c={(8m/\gamma\epsilon)}^2$ such that for any $n\ge n'$ the
  following statement holds.

  Given a set $L$ of $n$ lines, an integer $\lambda\in\{1,\ldots,n\}$,
  and a slab $S\subseteq{<}0,1{>}$ of width $w\ge c\log(n)/n$, there
  exist a set $L'\subset L$ of at most $(\frac{1}{2}+\varepsilon)n$
  lines and a slab $S'$ of width $\ge(1-2\gamma)w/m$ such that inside
  $S'$ the $\lambda$-level of $\mathcal{A}(L)$ does not intersect any
  line in $L\setminus L'$.
\end{lemma}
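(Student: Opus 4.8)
The plan is to realize a single iteration precisely as outlined in Steps~1--4 above and then verify the two quantitative bounds of the lemma. First I would invoke the subdivision lemma of Lo et al.~\cite{DBLP:journals/dcg/LoMS94} to cut $S$ into $m\le 64$ subslabs $S_1,\ldots,S_m$, each carrying at most $\tfrac1{32}\binom n2$ vertices of $\mathcal A(L)$; for each $S_i={<}\ell_i,r_i{>}$ I build the trapezoid $T_i$ from the $(\lambda\pm n/8)$-levels at $x=\ell_i$ and $x=r_i$ and its surrounding $1$-tube $\tau_i$. By Lemma~\ref{lem:bound_trapezoid}, $T_i$ contains the $\lambda$-level of $\mathcal A(L)$ inside $S_i$ and is met by at most $n/2$ lines of $L$. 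Since $w_1+\dots+w_m=w$, pigeonhole yields a subslab with $w_i\ge w/m$; call such a subslab \emph{wide}. The output will be $S'=\mathrm C_\gamma(S_i)$ for a suitably chosen wide subslab --- its width is $(1-2\gamma)w_i\ge(1-2\gamma)w/m$, as required --- and $L'$ will be the set of lines of $L$ meeting $\tau_i$ inside $\mathrm C_\gamma(S_i)$.

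The central estimate is $|L'|=n_i\le(\tfrac12+\varepsilon)n$. I would split $L'$ into the lines that genuinely meet $T_i$ (at most $n/2$, by Lemma~\ref{lem:bound_trapezoid}) and the \emph{extra} lines, which meet $\tau_i$ somewhere in $\mathrm C_\gamma(S_i)$ but avoid $T_i$ throughout $S_i$. An extra line $g$ cannot move from above the top chord $a_i$ of $T_i$ to below its bottom chord $b_i$ without crossing $T_i$, hence it stays weakly above $a_i$ on all of $S_i$ (or, symmetrically, weakly below $b_i$). Then the affine function $x\mapsto g(x)-a_i(x)$ is nonnegative on $[\ell_i,r_i]$ and, at the witnessing point $x_0\in\mathrm C_\gamma(S_i)$, at most $\sqrt{1+x_0^2}\le\sqrt2$; since $x_0$ lies at distance at least $\gamma w_i$ from both ends of $S_i$, linearity forces $g(\ell_i)-a_i(\ell_i)$ and $g(r_i)-a_i(r_i)$ to be at most $\sqrt2/\gamma$, so $g$ stays inside the horizontal strip of height $\Delta:=\sqrt2/\gamma$ just above $a_i$ over the whole of $S_i$. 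Consequently, at every fixed $x\in S_i$ each extra line crosses a vertical window of height $\Delta$ about $a_i(x)$ (or about $b_i(x)$), which by Lemma~\ref{lem:diskintersection} means precisely that the corresponding disjoint unit disk lies within distance $O(1/\gamma)$ of the primal line of slope $x$ through that window.

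The hard part is bounding the number of extra lines, and this is where the width hypothesis $w\ge c\log(n)/n$ is spent. Here I would appeal to the spread bound of Alon, Katchalski and Pulleyblank~\cite{DBLP:journals/dcg/AlonKP89}, in the sharpened form of Bereg, Dumitrescu and Pach~\cite{bereg2008sliding}, but localized to the slope window of the chosen wide subslab: for the $n$ pairwise disjoint unit disks dual to $L$ there should be a slope $x^\star\in S_i$ such that every line of slope $x^\star$ has only $O\!\left(\tfrac1\gamma\sqrt{n\log n\,/\,w_i}\right)$ disks within distance $O(1/\gamma)$, the loss of a factor $1/\sqrt{w_i}$ arising from confining the averaging argument to an arc of slopes of length about $w_i$. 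Evaluating the strip description at $x=x^\star$ then caps the number of extra lines by $O\!\left(\tfrac1\gamma\sqrt{n\log n\,/\,w_i}\right)$, and since $w_i\ge w/m\ge(c\log n)/(mn)$ with $c=(8m/(\gamma\varepsilon))^2$, this quantity is at most $\varepsilon n$ once constants are absorbed and $n$ exceeds a threshold $n'$ chosen large enough that the $\log n$ term is dominated. I expect the delicate point to be exactly this localized spread estimate: quantifying how much the Alon--Katchalski--Pulleyblank bound degrades when the admissible directions are restricted to a short arc, and checking that the degradation is no worse than $1/\sqrt{w_i}$ so that the stated value of $c$ suffices for \emph{every} wide subslab.

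Finally I would assemble the iteration. Among the wide subslabs (there is at least one, and the spread estimate applies to each) pick any $S_i$, set $S'=\mathrm C_\gamma(S_i)$ and $L'=\{g\in L: g\text{ meets }\tau_i\text{ within }\mathrm C_\gamma(S_i)\}$, and decrease $\lambda$ by the number of discarded lines lying below $\tau_i$. The bounds $|L'|\le(\tfrac12+\varepsilon)n$ and $\mathrm{width}(S')\ge(1-2\gamma)w/m$ have already been checked. For the remaining claim, observe that $S'=\mathrm C_\gamma(S_i)\subseteq S_i$, so by Lemma~\ref{lem:bound_trapezoid} the $\lambda$-level of $\mathcal A(L)$ inside $S'$ is contained in $T_i\subseteq\tau_i$; therefore any line of $L\setminus L'$, which by definition misses $\tau_i$ inside $\mathrm C_\gamma(S_i)$, in particular misses the $\lambda$-level there, which is exactly what had to be shown.
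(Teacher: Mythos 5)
Your overall strategy is sound and your assembly of the iteration (the Lo et al.\ subdivision, the trapezoids, picking a wide subslab, the width of the core, and the closing observation that the $\lambda$-level stays inside $T_i\subseteq\tau_i$ within $S'$) matches the paper. Where you genuinely diverge is in bounding the ``extra'' lines. The paper argues via the strict convexity of the $1$-tube boundary: a line meeting $\tau_i\setminus T_i$ inside the core must meet $\tau_i$ at one of the two vertical walls of the core (Proposition~\ref{prop:linebound}) and hence lies inside $\tau_i$ over at least a $\gamma$-fraction of $[\ell_i,r_i]$ (Proposition~\ref{prop:phibound}); this converts a pointwise count into the integral of $g$ over the \emph{whole} slab $S$, which is bounded once by Lemma~\ref{lem:integral} and distributed to a wide subslab by pigeonhole. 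You instead exploit affineness to confine each extra line to a strip of height $\sqrt{2}/\gamma$ around $a_i$ or $b_i$ over all of $S_i$, and then average \emph{within the single subslab} to find a cross-section $x^\star$ where few lines sit in that strip. Both routes are valid and produce the same $\gamma^{-1}m\sqrt{n\log n/w}$-type bound; your factor $\gamma^{-1}$ comes from widening the strip, the paper's from the length of the contribution interval.

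The one substantive piece you leave unproved is the ``localized spread estimate,'' and you correctly identify it as the delicate point: it is precisely the content of the paper's Lemma~\ref{lem:integral}, instantiated for your widened strips (Lemma~\ref{lem:intersection} with $w_1=w_2=O(1/\gamma)$). Your heuristic for the degradation is the right one: with $k$ equally spaced slopes in a window of length $w_i$ the gaps are $|i-j|w_i/k$, the pairwise counts become $O\bigl(k/(\gamma^2 w_i|i-j|)\bigr)$, and optimizing $k\sim\gamma\sqrt{nw_i/\log(nw_i)}$ yields $O\bigl(\gamma^{-1}\sqrt{n\log(nw_i)/w_i}\bigr)$, i.e.\ exactly the $1/\sqrt{w_i}$ loss you predict. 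Two smaller points: you only need the estimate for the two specific primal lines dual to $(x^\star,a_i(x^\star))$ and $(x^\star,b_i(x^\star))$, not for \emph{every} line of slope $x^\star$ (the stronger statement is not what the averaging argument delivers); and the bound $\sqrt{1+x_0^2}\le\sqrt{2}$ silently uses $S\subseteq{<}0,1{>}$, which is indeed a hypothesis of the lemma. With the localized averaging bound written out in full, your proof is complete.
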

\paragraph{Analysis of the algorithm.}
Let us postpone the proof of Lemma~\ref{lem:iteration} for now and
first complete the overall analysis of the algorithm. Denote by $n_t$
the number of lines and by $w_t$ the width of the current slab after
$t$ iterations. We have $n_0 = n$ and $w_0=1$. By
Lemma~\ref{lem:iteration} we have
\[
n_t\le\left(\frac12+\varepsilon\right)^t n
\quad\mbox{and}\quad
w_t \geq \left( \frac{1-2\gamma}{m} \right)^t ,
\]
as long as $w\ge c\log(n)/n$. After some number of iterations, either
we are left with a constant number of lines or a slab of width
$w<c\log(n)/n$. As in the first case we can finish by brute force, let
us concentrate on the second case. Suppose $t^*$ is the smallest index
for which $w_{t^*}<c\log(n)/n$. The following inequalities are
equivalent:
\begin{eqnarray*}
  \left( \frac{1-2\gamma}{m} \right)^{t^*} & < &
  \left(\frac{8m}{\gamma\varepsilon}\right)^2\cdot\frac{\log n}{n} \\
  -t^*\log\left(\frac{m}{1-2\gamma}\right) & < &
  2\log\left(\frac{8m}{\gamma\epsilon}\right)+\log\log n-\log n\\
  t^* & > &
  \frac{\log n - 2\log\left(\frac{8m}{\gamma\epsilon}\right) - \log\log n}{\log\left(\frac{m}{1-2\gamma}\right)}.\\
\end{eqnarray*}
Since $\gamma$, $\epsilon$ and $m$ are all constant, the last inequality
implies that for any constant $0<\varepsilon'<1$ we have
\[
t^* > \log n \cdot \frac{(1-\varepsilon')}{\log(\frac{m}{1-2\gamma})} \,,
\]
for sufficiently large $n$ (depending on $\varepsilon'$). Hence the
number of lines to be considered after $t^*$ iterations is
\begin{eqnarray*}
  n_{t^*} & \le & \left( \frac{1}{2}+\varepsilon \right)^{t^*}\cdot n\\
  & < & \left(\frac{1}{2}+\varepsilon
  \right)^{ \log n \frac{1-\varepsilon'}{\log(\frac{m}{1-2\gamma})}}\cdot n\\
  &= & n^{ \log\left( \frac{1}{2}+\varepsilon
  \right)\frac{1-\varepsilon'}{\log(\frac{m}{1-2\gamma})}}\cdot n\\  
  &= & n^{ \log\left( \frac{1}{2}+\varepsilon
  \right)\frac{1-\varepsilon'}{\log(\frac{m}{1-2\gamma})} + 1}\\ 
  & \leq & n^{\frac56+\delta}\\
\end{eqnarray*}
where the last inequality uses $m\leq 64$ (and hence $\log m\leq 6$)
and where $\delta>0$ can be made arbitrarily small by choosing
$\varepsilon$, $\varepsilon'$ and $\gamma$ to be correspondingly
small.

So after at most $t^*=\Theta(\log n)$ iterations we are left with a
slab $S$ and $O(n^{\frac56+\delta})$ lines. All lines that have been
discarded do not intersect the $1$-tube of the level that corresponds
to the original median level. Therefore any point on this level within
$S$ corresponds to a halving line for the original set of disks that
intersects $o(n)$ of the disks. Such a point can easily be found in a
brute force manner in $o(n)$ time.

Denote by $R(n)$ the runtime of the algorithm for $n$ disks. Each
iteration can be handled in time linear in the number of lines/disks
remaining and so
\[
R(n)\le\sum_{t=0}^{t^*} cn_t\le cn\sum_{t=0}^{t^*}
\left(\frac{1}{2}+\varepsilon\right)^t< \frac{2c}{1-2\varepsilon}n =
O(n) ,
\]
for some constant $c$. This proves Theorem~\ref{thm:2dlinear}.

\paragraph{Proof of Lemma~\ref{lem:iteration}.}
It remains to prove that we can select a constant fraction of lines to
be discarded in each iteration while at the same time the width of the
current slab does not shrink too much. To begin with we need a slab
whose $1$-tube is not intersected by too many lines. To show that such
a slab exists, we use an averaging argument: While a single $1$-tube
$\tau_i$ may be intersected by all lines from $L$, on average the
number of intersecting lines per slab is sublinear. To this end we
define a function $g$ by setting $g(x)$ to be the number of lines that
intersect $\bigcup_{i=1}^m \tau_i$ at $x\in(\ell,r)$. The following
lemma provides an upper bound on the average number of such lines.
\begin{restatable}{lemma}{lemIntegral}\label{lem:integral}
  For a slab $S={<}\ell,r{>}\subseteq{<}0,1{>}$ of width $w=r-\ell$,
  there is some constant $c\le 4$ such that
  \[
  \int_{\ell}^r g(x)\,\mathrm{d}x \le c\sqrt{nw \log(nw)}\,,
  \]
  if $nw$ is sufficiently large.
\end{restatable}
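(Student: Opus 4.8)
The plan is to bound the integral $\int_\ell^r g(x)\,\mathrm dx$ by counting, for each line $p^*\in L$, the total horizontal length of the set of $x\in(\ell,r)$ at which $p^*$ intersects $\bigcup_i\tau_i$, and then summing over the lines; by Fubini this sum equals the integral. First I would fix $x\in(\ell,r)$ and observe that $x$ lies in exactly one subslab $S_i$, so $g(x)$ counts lines meeting $\tau_i$ at that $x$. By construction $\tau_i$ at abscissa $x$ is a vertical segment centered (roughly) on the $\lambda$-level, of vertical extent equal to the extent of $T_i$ there plus $2\sqrt{1+x^2}\le 2\sqrt2$ (since $S\subseteq{<}0,1{>}$), and by Lemma~\ref{lem:bound_trapezoid} the trapezoid $T_i$ already has at most half the lines crossing it. The key geometric fact I would extract is that $\bigcup_i\tau_i$ is sandwiched between the graph of a convex function and the graph of a concave function (each being an upper/lower envelope of finitely many translates of $\pm\sqrt{1+x^2}$ attached to the trapezoid corners), so the region is ``thin'' in a controlled sense.

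The heart of the argument is a levels/$k$-set type estimate. The $\lambda$-level and the levels $\lambda\pm n/8$ are each crossed, over the whole slab, $O(n)$ times in the amortized sense one needs; more precisely, the quantity I want to control is $\sum_{p^*\in L}\mathrm{length}\{x: p^*\text{ meets }\tau\}$, and a line $p^*$ meets $\tau$ only where the trapezoid boundary is within vertical distance $\sqrt{1+x^2}$ of $p^*$. Fixing a subslab $S_i$ of width $w_i$ and using that $T_i$ has $\le n/2$ crossing lines together with the fact that $\tau_i$ only thickens $T_i$ by an additive $O(1)$ band, one shows $\int_{S_i} g \le \tfrac n2 w_i + O(1)\cdot(\#\text{lines within the additive band})$; a line enters the additive band of width $O(1)$ around a convex curve only on an interval whose length is $O(1/(\text{relative slope}))$, and a packing/charging argument over all $n$ lines bounds the total such length by $O(\sqrt{n w_i})$ after optimizing — this is where the disjointness of the \emph{disks} (equivalently, a separation between the $p^*$ inside any fixed vertical strip) enters to prevent too many lines from crowding the band. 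Summing over the $m$ subslabs and using $\sum w_i = w$ with Cauchy--Schwarz gives $\int_\ell^r g \le O(\sqrt{nw})\cdot\sqrt{m}$, but to get the sharper $\sqrt{nw\log(nw)}$ one must instead partition $(\ell,r)$ into $\Theta(\sqrt{nw/\log(nw)})$ equal pieces and apply the additive-band bound on each, trading off the number of pieces against the per-piece cost; choosing the piece count to balance $\sqrt{\text{pieces}}$ against $n w/\sqrt{\text{pieces}}$ yields the claimed $c\sqrt{nw\log(nw)}$ with $c\le 4$ once $nw$ is large enough.

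I expect the main obstacle to be the additive-band estimate: controlling how long a single line $p^*$ can stay within vertical distance $\sqrt{1+x^2}$ of a convex (or concave) piecewise-envelope boundary, summed over all $n$ lines, without this sum degenerating to $\Theta(n)$ per unit width. The convexity of $\sqrt{1+x^2}$ is essential here — it forces the boundary curve and any (straight) line $p^*$ to diverge at a rate that makes the dwell interval short unless the line is nearly tangent, and only $O(\sqrt{nw\log(nw)})$ lines can be near-tangent to a single convex curve over a strip of width $w$ because the $n$ lines $p^*$ are in ``general position'' with bounded clustering coming from the disjointness of the original unit disks. Packaging this cleanly — probably via a dyadic decomposition on the slope of $p^*$ relative to the local slope of the boundary, or via the known $O(\sqrt n)$-type bound on the complexity of a single level restricted to a strip — and tracking the logarithmic factor through the optimization of the number of sub-pieces is the part that needs care; everything else (Fubini, the $O(1)$ thickening, summing over the constantly many $S_i$) is routine.
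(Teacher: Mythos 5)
Your decomposition fails already at its first step. You write $\int_{S_i} g \le \tfrac n2 w_i + O(1)\cdot(\#\text{band lines})$ and sum over the subslabs; but the first terms alone add up to $\tfrac n2 w$, and $\tfrac n2 w$ exceeds $4\sqrt{nw\log(nw)}$ precisely in the regime ``$nw$ sufficiently large'' that the lemma addresses (it would require $nw\le 64\log(nw)$). No refinement of the band term can repair this. The root cause is that the quantity actually being bounded is not ``lines crossing the thick tube $\tau_i$ at $x$'' (which is always at least the $\Theta(n)$ lines passing through the trapezoid $T_i$ itself); it is the much thinner quantity ``dual lines within vertical distance $\sqrt{1+x^2}$ of the single point of the $\lambda$-level at abscissa $x$'', i.e., in the primal, the number of disks met by the one halving line of slope $x$. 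Any accounting that charges every trapezoid-crossing line for the full width of its subslab cannot reach the stated bound.

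The second, independent gap is the ``additive-band estimate'', which you correctly identify as the heart of your argument but do not prove. As stated it is also too strong: a per-subslab bound of $O(\sqrt{nw_i})$ with no logarithmic factor would, fed through your own averaging step, produce a halving line meeting $O(\sqrt n)$ disks, improving the $O(\sqrt{n\log n})$ bound of Alon et al.\ whose optimality the paper's conclusion explicitly lists as open; and there is no ``$O(\sqrt n)$ bound on the complexity of a single level restricted to a strip'' to invoke (single-level complexity is itself a hard open problem). The paper's proof avoids per-line dwell-time geometry entirely: it folds the slab into $k$ translates via $f(x)=\sum_{i=0}^{k-1}g(x+x_i)$, bounds $f(x)$ pointwise by $n+\sum_{i<j}|D_i\cap D_j|$ where $D_i$ is the set of disks meeting the halving line of slope $x+x_i$, applies the volume-packing Lemma~\ref{lem:intersection} (with $d=2$) to get $|D_i\cap D_j|=O\bigl(k/(w|i-j|)\bigr)$, sums the resulting harmonic series to $O\bigl((k^2/w)\log k\bigr)$, integrates, and optimizes $k\approx\sqrt{\pi nw/(16\ln(nw))}$. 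The logarithm thus arises from a harmonic sum over pairs of discrete slopes, not from a repartitioning/balancing trick, and the disjointness of the disks enters only through Lemma~\ref{lem:intersection}.
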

\begin{proof}
  We follow the approach of
  Alon~et~al.~\cite{DBLP:journals/dcg/AlonKP89} but are more specific
  about some technical details. We define $x_i := \ell + iw/k$, for $i
  = 0, \ldots, k-1$ and some parameter $k$ to be specified later and
  consider the function
  \[
  f(x) := \sum_{i=0}^{k-1} g(x+x_i)
  \]
  over the domain $[0,w/k]$. Clearly, we have
  \[
  \int_0^{w/k} f(x)\,\mathrm{d}x = \int_{\ell}^r g(x)\,\mathrm{d}x\,.
  \]
  Next we bound $f(x)$ for some arbitrary but fixed $x$. To this end,
  we move back to the primal setting and consider the set $H$ of
  halving lines $h_i$ with slope $x_i+x$, for
  $i\in\{0,\ldots,k-1\}$. Let $D_i$ denote the set of disks from $\D$
  that intersect $h_i$. The value of $f$ is the number of pairs
  $(d,h)\in D\times H$ where $d\cap h\ne\emptyset$. A (generous) upper
  bound for this quantity is provided by
  \[
  n + \sum_{i<j}|D_i \cap D_j|\,,
  \]
  where the first term counts every disk that intersects only one line
  and the second term counts every disk that is intersected by at
  least two lines. (In this way, a disk that is intersected by $c$
  lines is counted $1+\binom{c}{2}$ times.)

  Let $\vec{v}_i={(-x_i-x,1)}^T/\sqrt{{(x_i+x)}^2+1}$ be the (unit)
  normal vector to $h_i$. By Lemma~\ref{lem:intersection} (where
  $d=2$, $w_1=w_2=2$, $\vec{v}_1=\vec{v}_i$, and
  $\vec{v}_2=\vec{v}_j$) we have (using $x+x_i\leq 1$)
  \[
  |D_i\cap D_j|\le\frac{16}{\pi\,|\det(\vec{v}_1,\vec{v}_2)|}=
  \frac{16\sqrt{(x_i+x)^2+1}\sqrt{(x_j+x)^2+1}}{\pi\,|x_i-x_j|}\le
  \frac{32}{\pi\,|x_i-x_j|}=\frac{32k}{\pi w\,|i-j|}
  \]
  and therefore
  \[
  n + \sum_{i<j}|D_i \cap D_j| \leq n + \frac{32k}{\pi w}\sum_{i<j}
  \frac{1}{j-i}\,.
  \]
  The sum can be bounded using
  \[
  \sum_{i<j} \frac{1}{j-i} = \sum_{a=1}^{k-1}\frac{k-a}{a} =
  kH_{k-1}-(k-1)<1+k\ln(k)\,,
  \]
  where the last inequality uses the well-known bound $H_n<1+\ln(n)$
  for the harmonic number. We started out by fixing a particular $x$,
  but the derived bound holds for any arbitrary $x$. Altogether we
  obtain
  \[
  f(x) < n + \frac{32k}{\pi w}(1+k\ln(k))=n + \frac{32}{\pi
    w}(k+k^2\ln(k))
  \]
  and so
  \[
  \int_{\ell}^r g(x)\,\mathrm{d}x = \int_0^{w/k} f(x)\,\mathrm{d}x <
  \frac{nw}{k} + \frac{32}{\pi}(1+k\ln(k))\,.
  \]
  %
  %
  %
  %
  %
  Setting $k=\left\lceil\sqrt{\pi nw/(16\ln(nw))}\right\rceil$ in the
  previous expression and omitting the ceilings (it can be verified
  that this only increases the value of the expression, provided
  $nw\ge 512$) yields
  \[
  \int_{\ell}^r g(x)\,\mathrm{d}x <
  \frac{4}{\sqrt{\pi}}\sqrt{nw\ln(nw)} +\frac{32}{\pi}
  +\frac{8}{\sqrt{\pi}}\sqrt{\frac{nw}{\ln(nw)}}\ln\left(\sqrt{\frac{\pi
        nw}{16\ln(nw)}}\right)\,,
  \]
  which---noting that $\sqrt{\pi x/(16\ln x)}<\sqrt{x}$, for $x\ge
  1$---is upper bounded by
  \[
  \frac{4}{\sqrt{\pi}}\sqrt{nw\ln(nw)} +\frac{32}{\pi}
  +\frac{8}{\sqrt{\pi}}\sqrt{\frac{nw}{\ln(nw)}}\ln(\sqrt{nw})=
  \frac{8}{\sqrt{\pi}}\sqrt{nw\ln(nw)} +\frac{32}{\pi}.
  \]
  It can be checked that the last expression is upper bounded by
  $4\sqrt{nw\log_2(nw)}$, for $nw\ge 226$.
\end{proof}

By the pigeonhole principle, the integral is small for most subslabs. But
bounding the integral is not sufficient to bound the number of lines that
intersect the $1$-tube, because lines that do so for a very short interval only
do not contribute much to the integral. To account for such lines we restrict
our focus to the $\gamma$-core of the slabs instead. For a slab $S_i$ let
$d_i(x)$ denote the number of lines that intersect $\tau_i\setminus T_i$ at
$x$, for $x\in(\ell_i,r_i)$. Clearly $d_i\le g$. Furthermore let
\[
\phi_{\gamma,i} = \max\,\{d_i(x)\colon
x\times\R\subset\mathrm{C}_{\gamma}(S_i)\} .
\]
\begin{proposition}\label{prop:linebound}
  The number of lines from $L$ that intersect $(\tau_i\setminus
  T_i)\cap \mathrm{C}_{\gamma}(S_i)$ is bounded by $2\phi_{\gamma,i}$,
  for any $i\in\{1,\ldots,m\}$ and $0<\gamma<1/2$.
\end{proposition}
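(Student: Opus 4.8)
The plan is to exploit the convexity/concavity structure of the $1$-tube boundaries established just before the statement. Recall that $\tau_i\setminus T_i$ consists of two regions: an upper part bounded below by the top segment of $T_i$ (a line segment, i.e.\ an affine function) and above by the upper boundary of the $1$-tube of $a_i$, which is $a_i^*+\sqrt{1+x^2}$ translated vertically, hence a strictly convex function; and a symmetric lower part bounded by an affine function from above and the strictly concave function $b_i^*-\sqrt{1+x^2}$ from below. I would argue that each of these two regions can be stabbed by at most $\phi_{\gamma,i}$ lines within $\mathrm{C}_\gamma(S_i)$, and then sum the two bounds.

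First I would focus on the upper region $U_i$. A line $h\in L$, appearing in the dual as a point, dualizes back... no: here $L$ is a set of lines and we ask which of them \emph{intersect} the region $U_i\cap\mathrm{C}_\gamma(S_i)$, where $U_i$ is a planar region bounded below by an affine function $g_{\mathrm{aff}}(x)$ and above by a strictly convex function $g_{\mathrm{cvx}}(x)=g_{\mathrm{aff}}(x)+(\text{something positive})$; more precisely $U_i$ is the region between the line $a_i$ and the upper $1$-tube boundary of $a_i$, so its vertical extent at abscissa $x$ is $\sqrt{1+x^2}$ minus the vertical gap between $a_i$ and the top segment of $T_i$. The key observation is this: a line $h$ (itself an affine function) enters and leaves the convex region $U_i$; since $U_i$ is the region below a convex curve and above a line, the set of $x$ for which $h(x)$ lies inside $U_i$ is an interval (the region is \emph{vertically convex} and moreover its upper boundary is convex, so the set $\{x : g_{\mathrm{aff}}(x)\le h(x)\le g_{\mathrm{cvx}}(x)\}$ is an intersection of a half-line-type set with an interval, hence an interval). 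Therefore, on the core $\mathrm{C}_\gamma(S_i)$, the set of abscissas where $h$ meets $U_i$ is an interval $I_h$; if $h$ meets $U_i\cap\mathrm{C}_\gamma(S_i)$ at all, then $I_h$ contains at least one endpoint of $\mathrm{C}_\gamma(S_i)$ — here is where I would need to be slightly careful. The claim I want is: every such $h$ is ``present'' at the left endpoint or at the right endpoint of $\mathrm{C}_\gamma(S_i)$. This is \emph{not} automatic for an arbitrary subinterval, so the real argument must be: the width of $\mathrm{C}_\gamma(S_i)$ and the shallowness of $U_i$ force every chord of $h$ through $U_i$ to be long enough to reach a boundary of the core. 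That seems to be exactly the reason the core (rather than all of $S_i$) is used, and it is the step I expect to be the main obstacle.

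Granting that, the argument finishes cleanly: every line meeting $U_i\cap\mathrm{C}_\gamma(S_i)$ meets $U_i$ at the left boundary abscissa $x_L$ of $\mathrm{C}_\gamma(S_i)$ or at the right boundary abscissa $x_R$. At a fixed abscissa $x$, the number of lines of $L$ whose value $h(x)$ lies in the vertical slab $U_i$ has vertical extent at that $x$; counting lines passing through a fixed vertical segment, this count is at most $d_i(x)$ by definition of $d_i$ (the number of lines intersecting $\tau_i\setminus T_i$ at $x$, and at $x_L,x_R\in\mathrm{C}_\gamma(S_i)$ we have $d_i(x)\le\phi_{\gamma,i}$). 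Hence the number of lines meeting $U_i\cap\mathrm{C}_\gamma(S_i)$ is at most $2\phi_{\gamma,i}$ — but this already counts both endpoints, so I would instead split: lines meeting $U_i$ are $\le\phi_{\gamma,i}$ (each such line is present at $x_L$ \emph{or} at $x_R$, but by the interval property and connectivity of the core one endpoint suffices once one checks orientation), and symmetrically lines meeting the lower region $\le\phi_{\gamma,i}$, for a total of $2\phi_{\gamma,i}$. So the cleanest route is: (i) show $\{x\in\mathrm{C}_\gamma(S_i): h\text{ meets }U_i\}$ is an interval whose presence anywhere in the core forces presence at one distinguished endpoint of the core; (ii) count lines present at that endpoint by $\phi_{\gamma,i}$; (iii) do the same for the lower region; (iv) add. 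The convexity of $\sqrt{1+x^2}$ and the fact that the two $1$-tube boundaries are translates of each other are what make step (i) work, and pinning down (i) rigorously — in particular verifying that on the core the ``entering interval'' of any crossing line reaches a core endpoint — is where the work lies.
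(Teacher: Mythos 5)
Your framework is the paper's: split $\tau_i\setminus T_i$ into its upper and lower components, argue that any line meeting a component inside the core is already inside $\tau_i\setminus T_i$ at one of the two core endpoints $x_L,x_R$, and bound the count by $d_i(x_L)+d_i(x_R)\le 2\phi_{\gamma,i}$. But the step you yourself single out as the obstacle --- your step (i) --- \emph{is} the content of the proposition, and your guess at the mechanism is wrong: it has nothing to do with the width of the core or the ``shallowness'' of $U_i$ forcing chords to be long (that quantitative idea is what drives Proposition~\ref{prop:phibound}, not this one). What is needed is a pure convexity/connectivity argument. For the upper component, let $\delta=\ell-a_i$ (the line minus the affine function through the top edge of $T_i$) and $u(x)=\sqrt{1+x^2}$, so that $\ell$ lies in the upper component at $x$ iff $0<\delta(x)\le u(x)$, and $\ell$ lies strictly above $\tau_i$ exactly on $I=\{x:(u-\delta)(x)<0\}$, which is an open interval because $u-\delta$ is strictly convex. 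If $\ell$ is in the upper component at some $x_0\in[x_L,x_R]$ but in $\tau_i\setminus T_i$ at neither endpoint, then: it cannot have $\delta\le 0$ at both endpoints (affinity would force $\delta(x_0)\le 0$); it cannot have both endpoints in $I$ (then $x_0\in I$); the only remaining case is mixed --- above $\tau_i$ at one endpoint, at or below the top edge of $T_i$ at the other --- and such a line crosses that edge inside the slab, i.e., it intersects $T_i$ and is charged to the ``at most $n/2$ lines meet $T_i$'' budget in the application. (This mixed case is also the one configuration the paper's own terse proof glosses over.)

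Two smaller corrections. Your claim that $\{x:g_{\mathrm{aff}}(x)\le h(x)\le g_{\mathrm{cvx}}(x)\}$ is an interval is false in general: it is a half-line minus an open interval, which can have two components (take $h$ with slope strictly between that of $g_{\mathrm{aff}}$ and $1$ that exceeds the convex boundary on a bounded window lying inside the half-line where $h>g_{\mathrm{aff}}$). The set that genuinely is an interval --- and the only one the argument needs --- is $I$ above. Also, your attempt to sharpen the per-component count to $\phi_{\gamma,i}$ ``once one checks orientation'' is unjustified: one line may be present at $x_L$ only and another at $x_R$ only. The clean accounting is simply that every relevant line is counted in $d_i(x_L)+d_i(x_R)$, taken over both components at once, which gives $2\phi_{\gamma,i}$.
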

\begin{proof}
  Let $\mathrm{C}_{\gamma}(S_i)={<}a_i,b_i{>}$ and consider a line
  $\ell$ that intersects $(\tau_i\setminus T_i)\cap
  \mathrm{C}_{\gamma}(S_i)$. Then $\ell$ intersects at most one
  boundary of $\tau_i$, say, the upper boundary $U$. As $U$ is
  strictly convex, the line $\ell$ intersects $\tau_i$ at $x=a_i$ or
  $x=b_i$ (possibly both). Therefore, the number of such lines is
  upper bounded by $d_i(a_i)+d_i(b_i)\le 2\phi_{\gamma,i}$.
\end{proof}

\begin{proposition}\label{prop:phibound}
  $\phi_{\gamma,i}w_i\le \gamma^{-1}\int_{\ell_i}^{r_i} g(x)\,\mathrm{d}x$
\end{proposition}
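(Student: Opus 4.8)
The plan is to extract the core point where $d_i$ peaks and show that every line responsible for this peak is forced to remain inside $\tau_i$ over an $x$-range of length at least $\gamma w_i$; integrating these overlaps against $g$ then gives the bound. Rewrite the claim as $\gamma w_i\,\phi_{\gamma,i}\le\int_{\ell_i}^{r_i}g(x)\,\mathrm{d}x$, pick an $x^{\ast}$ with $\{x^{\ast}\}\times\R\subseteq\mathrm{C}_{\gamma}(S_i)$ and $d_i(x^{\ast})=\phi_{\gamma,i}$, and let $W$ be the set of the $\phi_{\gamma,i}$ lines of $L$ that meet $\tau_i\setminus T_i$ at $x^{\ast}$. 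Because the slabs are pairwise interior-disjoint and each $\tau_j$ lies inside $S_j$, for every $x\in(\ell_i,r_i)$ the value $g(x)$ equals the number of lines of $L$ that meet $\tau_i$ at $x$. Hence, writing $J_\ell:=\{x\in[\ell_i,r_i]:\ell\text{ meets }\tau_i\text{ at }x\}$, it suffices to prove $|J_\ell|\ge\gamma w_i$ for every $\ell\in W$: then $\int_{\ell_i}^{r_i}g(x)\,\mathrm{d}x\ge\sum_{\ell\in W}|J_\ell|\ge\phi_{\gamma,i}\,\gamma w_i$.

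Fix $\ell\in W$. Inside $S_i$ the tube $\tau_i$ is precisely the set of points lying below the strictly convex curve $y=a_i(x)+\sqrt{1+x^{2}}$ and above the strictly concave curve $y=b_i(x)-\sqrt{1+x^{2}}$, where $a_i$ and $b_i$ are the lines carrying the top and bottom edges of the trapezoid $T_i$. Thus $x\mapsto\ell(x)-a_i(x)-\sqrt{1+x^{2}}$ is concave (affine minus convex) and $x\mapsto\ell(x)-b_i(x)+\sqrt{1+x^{2}}$ is convex (affine minus concave), so each of the two sets $\{x\in[\ell_i,r_i]:\ell(x)\le a_i(x)+\sqrt{1+x^{2}}\}$ and $\{x\in[\ell_i,r_i]:\ell(x)\ge b_i(x)-\sqrt{1+x^{2}}\}$ is obtained from $[\ell_i,r_i]$ by removing one (possibly empty) open subinterval, and $J_\ell$ is their intersection. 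Now recall the observation underlying the proof of Proposition~\ref{prop:linebound} --- which rests on exactly this convexity and concavity: since $\ell$ meets $\tau_i\setminus T_i$ at the core point $x^{\ast}$, it avoids at least one of the two boundary curves of $\tau_i$ entirely within $S_i$. As $\ell$ lies in $\tau_i$ at $x^{\ast}$, it lies on the $\tau_i$-side of that avoided boundary throughout $[\ell_i,r_i]$, so the set corresponding to that boundary is all of $[\ell_i,r_i]$; hence $J_\ell$ equals the other of the two sets, and so $J_\ell=[\ell_i,p]\cup[q,r_i]$ for some $\ell_i\le p\le q\le r_i$.

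Since $\ell$ meets $\tau_i$ at $x^{\ast}$ we have $x^{\ast}\in J_\ell$, so either $x^{\ast}\le p$, whence $[\ell_i,x^{\ast}]\subseteq J_\ell$, or $x^{\ast}\ge q$, whence $[x^{\ast},r_i]\subseteq J_\ell$. Because $x^{\ast}$ lies in $\mathrm{C}_{\gamma}(S_i)={<}\ell_i+\gamma w_i,\,r_i-\gamma w_i{>}$, both $x^{\ast}-\ell_i$ and $r_i-x^{\ast}$ are at least $\gamma w_i$, so $|J_\ell|\ge\gamma w_i$ in either case. Summing over $\ell\in W$ gives $\int_{\ell_i}^{r_i}g(x)\,\mathrm{d}x\ge\phi_{\gamma,i}\,\gamma w_i$, i.e.\ the claim.

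The step I expect to be the real obstacle is the single-boundary-avoidance property imported from Proposition~\ref{prop:linebound}. It is precisely the convex shape of $\tau_i$ from above and its concave shape from below, together with $x^{\ast}$ being separated from both vertical sides of $S_i$ by at least $\gamma w_i$ (which is exactly what restricting to the $\gamma$-core provides), that prevents a line from dipping in and out of the thin tube over a negligible $x$-interval around $x^{\ast}$; all the remaining ingredients --- concavity/convexity of the two differences, and the identification of $g$ with the count of lines through $\tau_i$ inside the slab --- are routine.
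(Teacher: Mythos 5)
Your proof is correct and takes essentially the same route as the paper's: the paper also fixes the lines counted at a peak of $d_i$ in the core, uses the convexity/concavity of the tube boundaries (via the proof of Proposition~\ref{prop:linebound}) to conclude that each such line stays in $\tau_i$ from a core endpoint out to a slab endpoint, and hence contributes a length of at least $\gamma w_i$ to the integral of $g$. Your write-up merely makes the structure of the set $J_\ell$ (complement of a single open interval) explicit, which is a slightly cleaner packaging of the same argument.
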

\begin{proof}
  Let $\mathrm{C}_{\gamma}(S_i)={<}a_i,b_i{>}$ and consider a line
  $\ell$ that is counted in $\phi_{\gamma,i}$, that is, $\ell$
  intersects $\tau_i\setminus T_i$ at some $x\in[a_i,b_i]$. By the
  proof of Proposition~\ref{prop:linebound}, we may assume that
  $x\in\{a_i,b_i\}$. Using the same argumentation, we may also assume
  that $\ell$ intersects $\tau_i$ at some $x'\in\{\ell_i,r_i\}$.
  Regardless of the combination of $x$ and $x'$, it follows that
  $\ell$ contributes to $d_i$---and thus to $g$---for at least a
  $\gamma$-fraction of the interval $[\ell_i,r_i]$.
\end{proof}

\begin{figure}[htbp]
  \centering
  \includegraphics{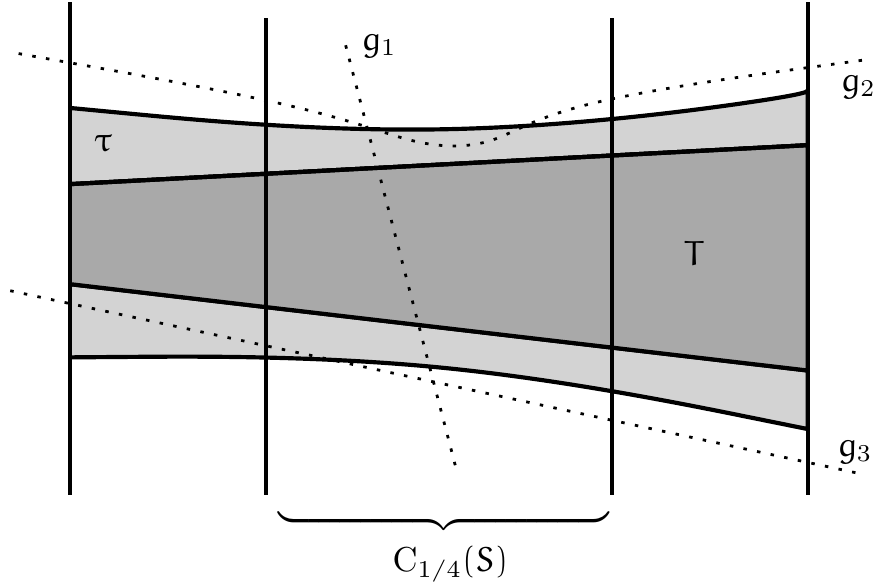}
  \caption{A slab $S$ with its core $\mathrm{C}_{1/4}(S)$ and a
    corresponding trapezoid $T$ with its $1$-tube $\tau$. The line
    $g_1$ intersects both $T$ and $\tau$, whereas $g_2$ and $g_3$
    intersect $\tau$ but not $T$. Every such line intersects $\tau$ at
    the boundary of the core, like $g_3$ does. An intersection pattern
    as depicted for $g_2$ is impossible for a straight line.}
  \label{fig:Core}
\end{figure}

\noindent Now we have all tools in place to complete the proof of
Lemma~\ref{lem:iteration}. Combining Proposition~\ref{prop:phibound} and
Lemma~\ref{lem:integral} yields
\[
\sum_{i=1}^m \phi_{\gamma,i}w_i \le 4\gamma^{-1}\sqrt{nw\log(nw)}\,.
\]

We claim that we can select any slab $S_j$ for which $w_j\ge w/m$ and
continue the search within $\mathrm{C}_{\gamma}(S_j)$. Such a slab
exists because there are $m$ slabs in total and $w=\sum_{i=1}^m
w_i$. We can then bound
\[
\phi_{\gamma,j}\frac{w}{m}\le\phi_{\gamma,j}w_j\le \sum_{i=1}^m
\phi_{\gamma,i} w_i \le 4\gamma^{-1}\sqrt{nw\log(nw)}
\]
and so
\[
\phi_{\gamma,j}\le 4\gamma^{-1}m\,\sqrt{\frac{n\log(nw)}{w}}\le
4\gamma^{-1}m\,\sqrt{\frac{n\log(n)}{w}}\,.
\]
The slab we continue to search in (the core $\mathrm{C}_{\gamma}(S_j)$
of $S_j$) has width at least $(1-2\gamma)w/m$. Lemma~\ref{lem:bound_trapezoid} and
Proposition~\ref{prop:linebound} bound the number $n_{\gamma,j}$ of
lines that intersect $\tau_j$ within $\mathrm{C}_{\gamma}(S_j)$ by
\[
n_{\gamma,j}\leq\frac{n}{2}+2\phi_{\gamma,j}\le\frac{n}{2}+\frac{8m}{\gamma}\,
\sqrt{\frac{n\log(n)}{w}}\,.
\]
Given any $0<\epsilon<1/2$ and $0<\gamma<1/2$, we have
$
n_{\gamma,j}\le\left(\frac{1}{2}+\varepsilon\right)n\,,
$
as long as
$
w\ge\left(\frac{8m}{\gamma\varepsilon}\right)^2\cdot\frac{\log
  n}{n}\,,
$
which is stated as an assumption. This completes the proof of
Lemma~\ref{lem:iteration}.

\section{Conclusions}
In this paper we studied the construction of separators for balls in
\emph{deterministic} linear time. The aim is to intersect as few balls
as possible while (approximately) bisecting the set of center
points. We presented essentially two ways to compute such seperators
with a sublinear number of intersections. The first algorithm is very
simple and straight-forward to implement (we gave all constants
explicitly), and obtains an arbitrarily good bisection in combination
with an asymptotically optimal number of intersections. The strength
of the second algorithm is to bisect the center points \emph{exactly},
but it works in the plane only.

Throughout the paper we assumed the balls to be disjoint, but we never
really used it. In fact, both algorithms work as long as we have some
density lower bound on the objects under consideration and some bound
on the size of the objects. This lower bound is implicitly given if
for instance the objects satisfy some fatness condition and are
disjoint. Also note that, in contrast to the continuous case, we do
not make use of the fact that the hyperplane to be constructed is
bisecting. Therefore it is easy to adapt the algorithm to, for
instance, have $n/3$ of the points on one side and $2n/3$ of the other
side of the hyperplane.

There are point sets for which the number of balls intersected by
\emph{every} halving hyperplane is $\Omega(n^{(d-1)/d})$. But already
for dimension three it is not clear if a halving plane with
$o(n^{3/4})$ intersections always exists ($O(n^{3/4})$ is not
difficult). In dimension two it is open if $o(\sqrt{n \log n})$ can be
achieved. So let us ask the following question: Is it true that for
every set of $n$ disjoint unit balls in $\R^d$ there exists a halving
hyperplane that intersects $O(n^{(d-1)/d})$ of the balls?

\paragraph{Acknowledgments.}
We want to thank Marek Elias, Ji\v{r}ka Matou\v{s}ek, Edgardo
Rold{\'a}n-Pensado and Zuzana Safernov{\'a} for interesting
discussions on the conjecture for higher dimensions and referring us
to related work.


%


\bibliographystyle{mh-url} %
\bibliography{Lib}

\end{document}